\definecolor{wrwrwr}{rgb}{0.3803921568627451,0.3803921568627451,0.3803921568627451}
\definecolor{dtsfsf}{rgb}{0.8274509803921568,0.1843137254901961,0.1843137254901961}
\definecolor{sexdts}{rgb}{0.1803921568627451,0.49019607843137253,0.19607843137254902}
\definecolor{rvwvcq}{rgb}{0.08235294117647059,0.396078431372549,0.7529411764705882}
\definecolor{cqcqcq}{rgb}{0.7529411764705882,0.7529411764705882,0.7529411764705882}
\newcommand{\m}{\mathcal}
\newcommand{\F}{\mathbb{F}}
\renewcommand{\P}{\mathbb{P}}
\newcommand{\divisor}{\operatorname{Div}}
\newcommand{\con}{\operatorname{Con}}
\newcommand{\supp}{\operatorname{Supp}}
\newtheorem{thm}{Theorem}[section]
\newtheorem{pro}[thm]{Proposition}
\newtheorem{lem}[thm]{Lemma}
\newtheorem{cor}[thm]{Corollary}
\newtheorem{conjecture}[thm]{Conjecture}
\newtheorem{rem}[thm]{Remark}
\theoremstyle{definition}
\newtheorem{defin}{Definition}
\newtheorem{example}{Example}
\begin{document}

\title{Locally recoverable codes from towers of function fields} %

\author{M. Chara, F. Galluccio and E. Martínez-Moro}
     \email{M. Chara: mchara@santafe-conicet.gov.ar}
     \email{F. Galluccio: frangallu996@gmail.com}
         \email{E. Martínez-Moro: edgar.martinez@uva.es}




\thanks{This work is partially supported by CONICET, ANPCyT PICT, UNL CAI+D 2020. First autor was partially funded by Research in Pairs Fellowship - CIMPA 2021. Second author was partially funded by a doctoral grant of the program Erasmus+ KA 107. Third author was supported in part by Grant PGC2018-096446-B-C21 funded by MCIN/AEI/
	10.13039/501100011033 and by “ERDF A way of making Europe”.\\
M. Chara: Researcher of CONICET at FIQ, Universidad Nacional del Litoral, Argentina\\
F. Galluccio: Universidad Nacional del Litoral and CONICET, Argentina\\
E. Martínez-Moro: Institute of Mathematics, Universidad de Valladolid, Spain}

\begin{abstract}

In this work we construct sequences of locally recoverable AG codes arising from a tower of function fields and give bound for the parameters of the obtained codes. In a particular case of a tower over $\F_{q^2}$ for any odd $q$, defined by Garcia and Stichtenoth in \cite{GS}, we show that the bound is sharp for the first code in the sequence, and we include a detailed analysis for the following codes in the sequence based on the distribution of rational places that split completely in the considered function field extension.

\bigskip
\noindent{\it Key words: Function fields, Towers, Codes, LRC, Asymptotic behavior}

\bigskip
\noindent{\it 2020 Mathematical Subject Classification: 94B27, 14H05, 11G20, 11T71}
\end{abstract}
\maketitle

\section{Introduction}

 Let $q$ be a prime power.  A linear code $\m{C}$ of length $n$ over the finite field  $\F_q$ with $q$ elements is a linear subspace of $\F_q^n$. The dimension of the code $k$ is the dimension of $\m{C}$ as a $\F_q$-vector space and its minimum distance $d$ is the minimum of the number of non-zero entries of a vector in $\m{C}\setminus\{\mathbf 0\}$. A linear code $\m{C}$ of length $n$ and dimension $k$  is a locally recoverable error-correcting code, (LRC for short) with locality $r$ if it is a code that produces an $n$-coordinate codeword from $k$ information symbols and, for any coordinate of the codeword, there exist at most $r$ other coordinates that allow to recover the lost coordinate. LRCs were introduced in \cite{GHSY12} motivated by the significant use of coding techniques applied to data distribution and cloud storage systems, since local recovery techniques enable us to repair lost encoded data by making use of small amount of data within the received vector instead of all information. 

One of the problems of interest in the field is the construction of long non trivial codes, in which the ground field cardinality is not much larger than the code length. It is well known that one way of obtaining long codes is by the use of algebraic curves with many rational points, or equivalently, algebraic function fields with many rational places. In this work we are interested in working with this type of algebraic-geometry codes (AG codes for short), or codes coming from the evaluation of rational points on an algebraic curve over $\F_q$, (see Section \ref{Pre} for precise definitions).  In this direction, Barg, Tamo and Vladut \cite{BargTamoVladut} constructed algebraic-geometry locally recoverable codes, generalizing previous works of Barg and Tamo \cite{TamoBarg, TamoBargOptimal}. Haymaker, Malmskog and Matthews in \cite{HMM}, used fiber products of maximal curves to construct LRC with availability, giving in this way a generalization of \cite{BargTamoVladut} for more than one recovery set. 

Since we aim to build sequences of AG codes that are also LR codes, we will work with towers $\m{F}=(F_0, F_1, \ldots)$ of function fields over the same finite field $\F_q$, in the way defined by Garcia and Stichtenoth (see for example \cite{GS, stichtenoth}). 
By using these type of asymptotically optimal towers, Tamo, Barg and  Frolov in \cite{TBF16} gave a construction of asymptotically good locally recoverable codes over $\F_{q^2}$ with locality $r=q-1$ whose rate $R$ and relative distance $\delta$ satisfy the following equation
\begin{equation}\label{GVthem}
R\geq \frac{r}{r+1}\left(1-\delta-\frac{3}{q+1}\right).\end{equation}

In our work, we also employ the Garcia-Stichtenoth function field given in \cite{GS} to build asymptotically good LRCs. Our method can be viewed as an extension of the construction given in \cite{BargTamoVladut}. Furthermore, our construction allows to compute the exact dimension of the code in severals steps on the tower, using a different evaluation set but in this case, although we can improve the dimension, we have a cost to pay in the bound for the minimum distance. Nevertheless, we are able to construct a family of LRC over 
$\F_{q^2}$ with locality $r=q-1$ whose rate $R$ and relative distance $\delta$ satisfy
\begin{equation}\label{GVus}R\geq \frac{r}{r+1}\left(1-\delta-\frac{2}{q}\right),\end{equation} which improves the asymptotic Gilbert-Varshamov type bound for LRCs with a single recovery given by \eqref{GVthem}.
This bound is a particular case of the bound given by Li, Ma and Xing in \cite{LMX19}, in which they use the same tower and the automorphism group of the function fields involved to build asymptotically good LRCs. Nevertheless, note that the main difference with this work is that we are able to compute the exact dimension of the constructed codes.

This work is organized as follows. In Section~\ref{Pre} we give some preliminary definitions and facts on the behavior of towers of function fields. Section~\ref{Primera} is devoted  to build locally recoverable codes from a tower of function fields, see Theorem~\ref{Teounr}.
Determining the exact minimum distance  of linear codes can be a hard problem. In particular, few results are known for the exact minimum distance of LRCs. In \cite{CKMTW}, this problem is addressed by using the construction given in \cite{HMM}, and analyzing relative parameters. In Section~\ref{sec4}, and in particular in Propositions \ref{exactmd} and \ref{prop42}, we compute the exact minimum distance of the constructed codes in one step extension of the tower, and we show that the method used to compute this distance will not be enough to compute the minimum distance in further steps. Finally, in the last section of this paper, we compute some asymptotic parameters and show the improvements mentioned in Equation \eqref{GVus}.

 \section{Preliminaries}\label{Pre}

Throughout this work, we will use the terminology of function fields for defining AG-codes (instead of algebraic curves).  An algebraic function field $F$ over $\F_q$ is a finite algebraic extension  of the rational function field $\F_q(x)$, where $x$ is a transcendental element over $\F_q$. For a complete introduction on  algebraic function field and their relationship with codes see for example~\cite{stichtenoth}.

Let $D$ and $S=P_1+\cdots +P_n$ be two disjoint divisors of an algebraic function field $F$ over $\F_q$, 
where $P_1,\ldots,P_n$ are different rational places of $F$, and consider the Riemann-Roch space associated to $D$
$$\m{L}(D) = \{f\in F \setminus \{0\} : (f) \geq -D\} \cup \{0\},$$
 where $(f)$ denotes the principal divisor of $f\in F$. The algebraic-geometry code
defined by $F$, $D$ and $S$ is
$$\m{C} = C_{\m{L}}(S,D) = \big \{ \big( f(P_1),f(P_2),\ldots, f(P_n) \big) \in \F_q^n : f\in \m{L}(D) \big\},$$
where $f(P_i)$ stands for the residue class of $f$ modulo $P_i$, $i=1,\dots, n$.

Since our  aim is to build infinite sequences of AG codes that are also LR codes, we will work with infinite sequences of function fields. For fixing our notation and settings we will use the following definitions.

\begin{defin}
	 We will consider sequences $\m{F}=(F_0, F_1, \ldots)$, of function fields over the same finite field $\F_q$ such that
	\begin{enumerate}[1.]
		\item $F_0\subsetneq F_1 \subsetneq F_2 \subsetneq \cdots$,
		\item the field extension $F_{i+1}/F_i$ is finite and separable for all $i\geq 0$, and
		\item $\F_q$ is the full constant field of each $F_i$ for all $i\geq 0$, i.e., the only elements of $F_i$ which are algebraic over $\F_q$ are the elements of $\F_q$.
	\end{enumerate}
Following the works of Garcia and Stichtenoth (see for example \cite{GS}, \cite{stichtenoth}), we will say that the sequence $\m{F}$ is a \emph{tower of function fields over} $\F_q$, if the genus $g(F_i)$ grows to infinity as $i\rightarrow \infty$.
\end{defin}

\begin{defin}
A sequence (or a tower) $\m{F}=(F_0, F_1, \ldots)$ is \emph{recursively defined} if there exist a bivariate polynomial $f\in \F_q[S,T]$ and transcendental elements $x_i$ $i=1,2,\ldots$, such that for all $i\geq0$ the following statements hold:
\begin{enumerate}[1.]
\item $F_0=\F_q(x_0)$ is the rational function field.
\item $F_{i+1}=F_i(x_{i+1})$ with $f(x_i, x_{i+1})=0$.
\item $[F_{i+1}:F_i]=\deg_{T}f$.
\end{enumerate}
\end{defin}

 The following definitions are important in the study of the asymptotic behavior of sequences of function fields.  As usual, we will denote by $\P(F)$ the set of all places of the function field $F/\F_q$. 
  Given a finite extension $E/F$ and a place $P\in \P(F)$ there are finitely many places $Q\in \P(E)$ lying above $P$. We will write $Q|P$ when $Q$ lies over $P$. If $\m{F}=(F_0, F_1, \ldots)$ is a sequence of function fields over $\F_q$, we say that a place $P\in \mathbb{P}(F_i)$ {\em splits completely} in $\mathcal{F}$ if $P$ splits completely in each extension $F_j/F_i$ (and in this case we have $[F_j:F_i]$ different places over $P$). The {\em splitting locus} of $\mathcal{F}$ over $F_0$ is defined as
\[\mathrm{Split}(\mathcal{F}/F_0):=\{P\in \mathbb{P}(F_0)\,:\,\text{$\deg(P)=1$ and $P$ splits completely in $\mathcal{F}$}\}\,,\] where $\deg(P)$ is the degree of the place $P$.
A place $P\in \mathbb{P}(F_i)$ is {\em ramified} in $\mathcal{F}$ if $P$ is ramified in any extension $F_j/F_i$.
The {\em ramification locus} of $\m{F}$ over $F_0$ is the set
$$\mathrm{Ram}(\m{F}/F_0):=\{P\in \mathbb{P}(F_0)\,:\,\text{$P$ ramified in some extension }F_n/F_0\}.$$ A place $P\in \mathbb{P}(F_i)$ {\em is totally ramified} in $\mathcal{F}$ if $P$ is totally ramified in each extension $F_j/F_i$ (in this case we have only one place $Q$ in $\P(F_j)$ over $P$ and the ramification index $e(Q|P)$ is equal to $[F_j:F_i]$).  The {\em complete ramification locus} of $\mathcal{F}$ over $F_0$ is defined as
\[\mathrm{CRam}(\mathcal{F}/F_0) :=\{P\in \mathbb{P}(F_0)\,:\,\text{$\deg(P)=1$ and $P$ is totally ramified in $\mathcal{F}$}\}\,.\]
Since every place $Q\in \mathbb{P}(F_i)$ lying above a place in $\mathrm{Split}(\mathcal{F}/F_0)\cup \mathrm{CRam}(\mathcal{F}/F_0)$ is a rational place (i.e. of degree one), we have that
\begin{equation}\label{splitcram}
N(F_i)\geq [F_i:F_0]|\mathrm{Split}(\mathcal{F}/F_0)|+|\mathrm{CRam}(\mathcal{F}/F_0)|\,,
\end{equation} where $N(F_i)$ is the number of rational places of $F_i$.

\begin{lem}
If $\mathcal{F}$ is a sequence such that $\mathrm{Split}(\mathcal{F}/F_0)\neq \emptyset$ then $\mathcal{F}$ is a tower.
\end{lem}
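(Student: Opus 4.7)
The plan is to show that the existence of a completely splitting place forces the number of rational places to grow at least exponentially, which via the Hasse--Weil bound forces the genus to tend to infinity, i.e.\ $\mathcal{F}$ is a tower in the sense of the paper.

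First I would pick any $P\in \mathrm{Split}(\mathcal{F}/F_0)$. By definition, for every $i\geq 0$, $P$ splits completely in the extension $F_i/F_0$, producing exactly $[F_i:F_0]$ distinct rational places of $F_i$ lying above it. More generally, applying the bound \eqref{splitcram} from the previous paragraph gives
\[
N(F_i)\;\geq\; [F_i:F_0]\cdot |\mathrm{Split}(\mathcal{F}/F_0)|\;\geq\;[F_i:F_0].
\]

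Next I would observe that the degrees $[F_i:F_0]$ tend to infinity. Indeed, by the definition of a sequence, each inclusion $F_i\subsetneq F_{i+1}$ is strict and finite separable, so $[F_{i+1}:F_i]\geq 2$, which yields $[F_i:F_0]\geq 2^{i}\to\infty$. Combined with the previous inequality we obtain $N(F_i)\to\infty$ as $i\to\infty$.

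Finally I would invoke the Hasse--Weil bound, which holds for any function field over $\F_q$:
\[
N(F_i)\;\leq\; q+1+2g(F_i)\sqrt{q},
\]
and hence
\[
g(F_i)\;\geq\;\frac{N(F_i)-q-1}{2\sqrt{q}}\,.
\]
Since the right-hand side tends to infinity, so does $g(F_i)$, which is precisely the condition required for $\mathcal{F}$ to be called a tower. There is no real obstacle here; the only subtlety is making sure the chain of genuinely proper separable extensions gives $[F_i:F_0]\to\infty$, after which the Hasse--Weil estimate does all the work.
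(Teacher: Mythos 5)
Your proof is correct and follows essentially the same route as the paper: a completely splitting place makes $N(F_i)\geq [F_i:F_0]\to\infty$, and the Hasse--Weil bound then forces $g(F_i)\to\infty$. You simply spell out the intermediate inequality and the $[F_i:F_0]\geq 2^i$ step, which the paper leaves implicit.
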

\begin{proof}
Since  $\mathrm{Split}(\mathcal{F}/F_0)\neq \emptyset$  there is a rational place  $P$ in $F_0$ that splits completely in each extension $F_i/F_0$ then, by the Hasse-Weil bound, we have that $g(F_i)\rightarrow\infty$ as  $i\rightarrow\infty$  so that $\mathcal{F}$ is actually a tower. 
\end{proof}

The \emph{limit} of a tower of function fields $\m{F}$ is given  by $$\lambda(\m{F})=\frac{N(F)}{g(F)},$$ it always exist and it is a non negative amount, so it is said that the tower is \emph{asymptotically good} if $\lambda(\m{F})>0$ and \emph{asymptotically bad} on the other case. Therefore, a tower is asymptotically good if and only if it has non-emply splitting locus and finite ramification locus (see \cite[Proposition 7.2.6]{stichtenoth}).

In this work we will ``lift'' divisors from one function field $F$ to an extension $E$ by using the conorm map on divisors 
$$\con_{E/F} : \divisor(F) \rightarrow \divisor(E),$$ 
that we now recall.  
If $P$ is a place in $F$, the conorm divisor of $P$ is the divisor 
$$\con_{E/F}(P) = \sum_{_{Q|P}} e(Q|P) \,Q$$ 
in $E$, where $e(Q|P)$ is the ramification index of the place $Q$ in $E$ over $P$. 
Now, the \textit{conorm divisor} of $D=\sum_P n_P P \in \divisor(F)$ in $E$ is given by
\begin{equation*} \label{conorm}
D':= \con_{E/F} (D) = \sum_P n_P \con_{E/F}(P). 
\end{equation*}

\section{Sequences of LRC}\label{Primera}

First we will present the construction of a LR code on an extension of function fields. This construction can be found Theorem 3.1 in \cite{BargTamoVladut}. We included here for the sake of completeness and adapting it to the function field language.  

\begin{thm}\label{teobargtamo} Let $F$ a function field over $\F_q$  and let $E=F(x)$ a function field extension of degree $m$. Let $S$ be a set of places of $F$ that split completely in $E/F$ and such that  
$$\{Q \in \P(E): Q\cap F \in S\}
\cap\{Q\in \P(E): \nu_Q(x)<0\}=\emptyset$$ and let $\m{B}=\{Q\in \P(E): Q\cap F\in S \}$. Then, if $|S|=s>0$ we have that $|\m{B}|= sm$. Choose a divisor $D$ of $F$ of degree $l$ such that $\supp(D)\cap S=\emptyset$, and let $\{f_1, \ldots, f_\ell\}$ a basis for $\m{L}(D)$, the Riemann-Roch space associated to $D$. Let $r=m-1$ and consider the space $V$ generated by 
$$\{f_wx^{e}:w=1, \ldots, \ell; e=0,\ldots, r-1\}.$$ Since $\supp(D)$ is disjoint from S, the evaluation map 
$$
\begin{array}{cccl}
 ev: & V  &  \longrightarrow& \F_q^{(r+1)s} \\
 & f& \rightarrow  &  (f(P_{11}), \ldots, f(P_{ms}) )
\end{array}
$$ is well defined. The image of this mapping is a locally recoverable code $\m{C}$ with locality $r$, which we denote by
$C(S,D)$. The code coordinates are naturally partitioned into $s$ subsets $A_j = \{P_{ij}\}_{i = 1,...,r+1}$ of size $r+1$ each. Denoting by $g(F)$ the genus of $F$ and $h=[E:\F_q(x)]$, we have that the parameters of the code satisfy

\begin{itemize}
\item $n= (r+1)s$,
 \item $k= r\ell\geq r(l+1-g(F))$,
 \item $d\geq n-l(r+1)-(r-1)h$,
\end{itemize}
 provided that the right-hand side of the inequality for $d$ is a positive integer. Local recovery of an erased symbol $c_{ij} = f (P_{ij})$ can be performed by polynomial interpolation through the positions of the points in the recovery set $A_j\setminus\{P_{ij}\}$.\end{thm}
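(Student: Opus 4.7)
The plan is to adapt the standard AG-code construction to the relative extension $E/F$, and then handle local recovery by Lagrange interpolation along the fibres of $E\to F$. First I would verify the counting: since each $P\in S$ splits completely in $E/F$, there are exactly $m=[E:F]=r+1$ rational places of $E$ lying above $P$, which gives $|\m{B}|=sm$ and $n=(r+1)s$. The evaluation map is well defined because each $f_w\in \m{L}(D)$ has poles only over $\supp(D)$, which is disjoint from $S$ and hence from $\m{B}$, while by hypothesis $x$ has no pole at any $Q\in \m{B}$; thus each $f_w x^e$ is regular at every $P_{ij}$.

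For the dimension, I would use that $\{1,x,\ldots,x^{m-1}\}$ is an $F$-basis of $E=F(x)$, which, combined with the $\F_q$-linear independence of $\{f_1,\ldots,f_\ell\}$ in $F$, implies that $\{f_w x^e : 1\leq w\leq \ell,\ 0\leq e\leq r-1\}$ is $\F_q$-linearly independent; hence $\dim V=r\ell$. Injectivity of $ev$ on $V$ then follows as a byproduct of the distance bound below, once one verifies that the pole-divisor degree of any nonzero $f\in V$ is strictly less than $n$. Combining with Riemann--Roch, $\ell\geq l+1-g(F)$, so $k\geq r(l+1-g(F))$.

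For the minimum distance, I would write a nonzero $f\in V$ as $f=\sum_{e=0}^{r-1} g_e x^e$ with $g_e\in \m{L}(D)$. Since $g_e\in \m{L}(\con_{E/F}(D))$ when viewed in $E$, and $x^e$ has pole divisor at most $(r-1)(x)_\infty$ in $E$, the element $f$ lies in $\m{L}(\con_{E/F}(D)+(r-1)(x)_\infty)$. Consequently
$$\deg(f)_\infty\leq [E:F]\deg D+(r-1)[E:\F_q(x)]=l(r+1)+(r-1)h.$$
The number of zeros of $f$ among the $P_{ij}$ is bounded by $\deg(f)_0=\deg(f)_\infty$, so the weight of the associated codeword is at least $n-l(r+1)-(r-1)h$, as required.

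Local recovery is the combinatorial heart of the statement. I would fix $j$ and denote by $P_j\in S$ the common restriction of $P_{1j},\ldots,P_{(r+1)j}$. Since $P_j$ splits completely in $E=F(x)$, the reductions $x(P_{ij})\in \F_q$ are the $r+1$ distinct roots modulo $P_j$ of the minimal polynomial of $x$ over $F$. Evaluating $f$ at $P_{ij}$ exhibits it as a polynomial of degree at most $r-1$ in $x(P_{ij})$ with coefficients $g_e(P_j)\in \F_q$, so any $r$ of the $r+1$ values determine the missing one by Lagrange interpolation. The hard part will be to confirm rigorously that the $x$-coordinates $x(P_{ij})$ are pairwise distinct, i.e.\ that the minimal polynomial of $x$ remains separable modulo $P_j$; this is exactly what is encoded by complete splitting of $P_j$ in $E=F(x)$, but it deserves an explicit check to make the interpolation argument airtight.
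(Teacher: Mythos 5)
The paper does not actually prove Theorem~\ref{teobargtamo}; it records the statement and refers to Theorem~3.1 of~\cite{BargTamoVladut} for the argument. The closest in-paper analogue is the proof of Theorem~\ref{Teounr}, which your reconstruction mirrors faithfully: count places above completely split places to get $|\m{B}|=sm$, check well-definedness from disjointness of poles, obtain $\dim V = r\ell$ from the $F$-linear independence of $1,x,\dots,x^{m-1}$ together with the $\F_q$-independence of $f_1,\dots,f_\ell$, bound the zeroes of a nonzero $f\in V$ by $\deg\bigl(\con_{E/F}(D)+(r-1)(x)_\infty\bigr)=l(r+1)+(r-1)h$, deduce both injectivity (hence $k=r\ell$) and the distance bound under the positivity hypothesis, and perform local recovery by Lagrange interpolation in $x$ along each fibre. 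All of that is correct and is essentially the paper's own argument one step up the tower.

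The one point you should not concede so quickly is your closing claim that pairwise distinctness of $x(P_{1j}),\dots,x(P_{(r+1)j})$ ``is exactly what is encoded by complete splitting of $P_j$ in $E=F(x)$.'' That equivalence fails. Complete splitting says $\mathcal O'/\mathfrak m_{P_j}\mathcal O'\cong\F_q^{\,r+1}$, where $\mathcal O'$ is the integral closure of $\mathcal O_{P_j}$ in $E$, but the subring $\mathcal O_{P_j}[x]/\mathfrak m_{P_j}\mathcal O_{P_j}[x]\cong\F_q[T]/(\overline{\varphi})$, with $\varphi$ the minimal polynomial of $x$ over $F$, need not agree with it when the conductor of $\mathcal O_{P_j}[x]$ in $\mathcal O'$ meets $P_j$; then $\overline{\varphi}$ can have repeated roots even though $P_j$ splits completely, and distinct places above $P_j$ can carry the same $x$-value. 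A concrete instance: with $q\equiv1\pmod4$, take $F=\F_q(t)$, $u^2=t-1$, $E=F(u)=F(x)$ where $x=ut$; the zero of $t$ splits completely in $E$, yet both places above it satisfy $x=0$. Pushing this to degree $m\ge3$ (e.g.\ $u^3=t-a$ with the zero of $t$ totally split, and $x=(u-c)(u-\omega c)+tu$ where $c,\omega c,\omega^2 c$ are the residues of $u$) gives two places in the same fibre with $x$-value $0$ and a third with a different value, and then interpolation from the remaining $r$ symbols is genuinely under-determined, so local recovery fails. So ``$x$ separates each fibre over $S$'' is an additional hypothesis, not a consequence of complete splitting. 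It holds whenever the Kummer--Dedekind criterion applies at the places of $S$, in particular for the Garcia--Stichtenoth tower used later (the residual Artin--Schreier polynomial $T^q+T-\beta$ with $\beta\neq0$ is separable), and the same implicit assumption is made, without comment, in the paper's proof of Theorem~\ref{Teounr}. Your instinct that this deserves an explicit check is right; state it as an assumption rather than attributing it to complete splitting.
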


\begin{rem}
Notice that the previous construction can be performed for any step in a tower of function fields, either consecutive or not. 
\end{rem}

In particular, we can consider a tower of function fields, and built a LR code, but starting with the rational function field, as the base field in the tower. 

\begin{thm} Let $\m{F}=\{F_j\}_{j=0}^{\infty}$ be a sequence of function fields and $\{x_i\}_{j=0}^{\infty}$ a sequence of transcendental elements over $\F_q$ such that $F_0=\F_q(x_0)$ is the rational function field and $F_j=F_{j-1}(x_j)$ for every $j>0$. Denote by $m_j=[F_j:F_{j-1}]$ and consider $E=F_{i}$, for some index $i\geq 2$.  
Let $S$ be a set of places of $F_0$ that split completely in $E/F_0$ and such that  
$$\{Q \in \P(E): Q\cap F_0 \in S\}\cap\left(
\bigcup_{j=1}^{i}\{Q\in \P(E): \nu_Q(x_j)<0\}\right)=\emptyset$$ and let $\m{B}=\{Q \in \P(E): Q\cap F_0 \in S\}$.  Then, if $|S|=s>0$ we have that $|\m{B}|= sm$
where $m=m_{i}\ldots m_1$.  Choose a divisor $D$ of $F_0$ of degree $l$ such that $\supp(D)\cap S=\emptyset$, and let $D'$ be the conorm divisor of $D$ in $F_{i-1}$, i.e., $D'=\con_{F_{i-1}/F_0}(D).$ Let $\{f_1, \ldots, f_w \}$ be a basis for $\m{L}(D')$, the Riemann-Roch space associated to $D'$. Consider the space $V$ generated by 
$$\{f_j\,x_{i}^{e_{i}}:1\leq j \leq w \text{ and }0 \leq  e_i\leq m_i-2\}.$$ Since $\supp(D)$ is disjoint from S, the evaluation map 
$$
\begin{array}{cccl}
 ev: & V  &  \longrightarrow& \F_q^{ms} \\
 & f& \rightarrow  &  (f(P_{11}), \ldots, f(P_{ms}) )
\end{array}
$$ is well defined. The image of this mapping is a locally recoverable code with locality $r=m_{i}-1$, which we denote by
$C_{i}(S,D)$. The code coordinates are naturally partitioned into $\widetilde{m}s$ subsets of size $m_{i}$ each: \begin{align*}
A_t^j &= \{P_{tu}^j:
1 \leq u \leq m_{i}\}\\
& =\{Q\in \P(F_{i}): Q\cap F_{i-1}=P_{t1}^j\cap F_{i_0-1}=\widetilde{P_{t}}^j\}
\end{align*} where $1\leq j \leq s$; $1\leq t \leq \widetilde{m}$ and $\widetilde{m}=m_{i-1}\ldots m_1=m/m_{i}$. Denoting by $h=[F_{i}:\F_q(x_i)]$, we have that the parameters of the code satisfy

\begin{itemize}
\item $n= ms=m_{i}\ldots m_1s$,
 \item $k= w(m_{i}-1)\geq (\deg(D')+1-g(F_{i-1}))(m_{i}-1)=(\widetilde{m}l+1-g(F_{i-1}))(m_i-1)$,
 \item $d\geq n-lm-(m_{i}-2)h$,
\end{itemize}
provided that the right-hand side of the inequality for $d$ is a positive integer. Local recovery of an erased symbol $f (P_{tu}^{j})$ can be performed by polynomial interpolation through the positions of the points in the recovery set $A_t^j\setminus\{P_{tu}^j\}$.
\end{thm}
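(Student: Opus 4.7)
The plan is to follow the template of Theorem~\ref{teobargtamo} (the Barg--Tamo--Vl\u{a}du\c{t} construction) but with $F = F_{i-1}$ playing the role of the base, $E = F_i$ the top, and then push everything down to $F_0$ by using the conorm map along the full extension $F_{i-1}/F_0$ to exhibit $D'$ and to count $|\mathcal{B}|$. The disjointness hypothesis is exactly what one needs to guarantee that no $x_j$ has a pole at a place of $\mathcal{B}$, which in particular takes care of both the well-definedness of the evaluation and the disjointness of $\supp(D')$ from the places above $S$ (since conorming preserves support above $F_0$).

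First I would check the easy items. The equality $|\mathcal{B}|=sm$ is immediate: every $P\in S$ splits completely in $F_i/F_0$ into $m=m_i\cdots m_1$ rational places. The evaluation map is well-defined because each $f_j\in\mathcal{L}(D')$ has no poles at places of $F_{i-1}$ lying above $S$ (since $\supp D\cap S=\emptyset$ forces $\supp D'\cap \{Q\cap F_{i-1}:Q\in\mathcal{B}\}=\emptyset$), and the disjointness hypothesis on the $\nu_Q(x_j)$ ensures $x_i$ is regular at each $P_{tu}^j$. For the $\mathbb{F}_q$-dimension of $V$, I would observe that $\{1,x_i,\ldots,x_i^{m_i-1}\}$ is an $F_{i-1}$-basis of $F_i$; then in any relation $\sum_{j,e}c_{j,e}f_j x_i^{e}=0$, grouping by powers of $x_i$ and using first $F_{i-1}$-independence of the $x_i^{e}$ and then $\mathbb{F}_q$-independence of the $f_j$'s gives $\dim_{\mathbb{F}_q}V=w(m_i-1)$. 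The Riemann--Roch inequality $w\geq \deg(D')+1-g(F_{i-1})$, combined with the standard fact $\deg(\con_{F_{i-1}/F_0}(D))=[F_{i-1}:F_0]\deg(D)=\widetilde{m}\,l$, then yields the stated lower bound on $k$.

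The main work is the minimum distance bound. I would write every nonzero $f\in V$ as $f=\sum_{e=0}^{m_i-2}g_e\,x_i^{e}$ with $g_e\in\mathcal{L}(D')$, and bound the pole divisor of $f$ inside $F_i$. Viewing each $g_e$ in $F_i$ puts $(g_e)_\infty\leq \con_{F_i/F_{i-1}}(D')$, a divisor of degree $m_i\widetilde{m}\,l=ml$. The pole divisor of $x_i$ in $F_i$ has degree $h=[F_i:\mathbb{F}_q(x_i)]$, so $x_i^{e}\in\mathcal{L}(e(x_i)_\infty)$ for $0\leq e\leq m_i-2$. Therefore $(f)_\infty$ has degree at most $ml+(m_i-2)h$, hence the same bound on the number of zeros of $f$, and the wt bound $d\geq n-ml-(m_i-2)h$ follows (assuming positivity).

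Finally, for locality I restrict $f\in V$ to one recovery set $A_t^j=\{P_{t1}^j,\ldots,P_{tm_i}^j\}$: since these all lie over the same place $\widetilde{P}_t^j$ of $F_{i-1}$, the values $g_e(\widetilde{P}_t^j)\in\mathbb{F}_q$ are well-defined, and $f(P_{tu}^j)=\sum_{e=0}^{m_i-2}g_e(\widetilde{P}_t^j)\,x_i(P_{tu}^j)^{e}$ is the evaluation of a polynomial in $\mathbb{F}_q[T]$ of degree at most $m_i-2=r-1$ at the $m_i$ points $x_i(P_{tu}^j)$. Complete splitting of $\widetilde{P}_t^j$ in $F_i/F_{i-1}$ makes the reduced minimal polynomial of $x_i$ split into distinct linear factors over $\mathbb{F}_q$, so the $m_i$ values $x_i(P_{tu}^j)$ are pairwise distinct and any $r=m_i-1$ of them determine the polynomial by Lagrange interpolation; this recovers any single erased coordinate. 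The step I expect to require the most care is the pole-divisor bound in $F_i$, specifically justifying that $\deg\con_{F_i/F_{i-1}}(D')=m_i\deg(D')$ and that the pole orders of $x_i^{e}$ and $g_e$ add up without interference, because the two contributions are supported on disjoint sets of places (poles of $x_i$ versus places in $\supp\con_{F_i/F_{i-1}}(D')$ which lie above $\supp D\subset F_0$, disjoint from $S$).
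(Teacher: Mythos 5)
Your proof is correct and follows essentially the same route as the paper, which disposes of this theorem in one line by invoking Theorem~\ref{teobargtamo} applied to the single-step extension $F_i/F_{i-1}$ with $F_{i-1}$ as base and $D'=\con_{F_{i-1}/F_0}(D)$ as the divisor. You have simply spelled out the details (the conorm degree $\deg D'=\widetilde{m}\,l$, the pole-divisor bound in $F_i$, and the Lagrange interpolation over the $m_i$ distinct residues of $x_i$) that the paper deems straightforward.
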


\begin{proof}
Straightforward from Theorem \ref{teobargtamo} in the particular  case of one step extension in the tower. 
\end{proof}

A modification of the previous result allow us to compute the exact dimension of the code, using a different vector space. Note that in this case, although we can compute the exact dimension, and improve the dimension in some cases, we have to pay a cost in the bound for the minimum distance of the code.

\begin{thm}\label{Teounr} Let $\m{F}=\{F_j\}_{j=0}^{\infty}$ be a sequence of function fields and $\{x_i\}_{j=0}^{\infty}$ a sequence of trascendental elements over $\F_q$ such that $F_0=\F_q(x_0)$ is the rational function field and $F_j=F_{j-1}(x_j)$ for every $j>0$. Denote by $m_j=[F_j:F_{j-1}]$ and consider $E=F_{i}$, for some index $i\geq 2$.  
Let $S$ be a set of places of $F_0$ that split completely in $E/F_0$ and such that  
$$\{Q \in \P(E): Q\cap F_0 \in S\}\cap\left(
\bigcup_{j=1}^{i}\{Q\in \P(E): \nu_Q(x_j)<0\}\right)=\emptyset$$ and let $\m{B}=\{Q \in \P(E): Q\cap F_0 \in S\}$.  Then, if $|S|=s>0$ we have that $|\m{B}|= sm$
where $m=m_{i}\ldots m_1$.  Choose a divisor $D$ of $F_0$ of degree $l$ such that $\supp(D)\cap S=\emptyset$, and let $\{f_1, \ldots, f_\ell \}$ a basis for $\m{L}(D)$, the Riemann-Roch space associated to $D$. Consider the space $V$ generated by 
$$\{f_wx_1^{e_1}\cdots x_{i}^{e_{i}}:1\leq w \leq \ell;\, 0\leq e_i \leq m_i-2\text{ and }0 \leq  e_j\leq m_j-1 \text{ for }j=1,\ldots, i-1 \}.$$ Since $\supp(D)$ is disjoint from S, the evaluation map 
$$
\begin{array}{cccl}
 ev: & V  &  \longrightarrow& \F_q^{ms} \\
 & f& \rightarrow  &  (f(P_{11}), \ldots, f(P_{ms}) )
\end{array}
$$ is well defined. The image of this mapping is a locally recoverable code $\m{C}$ with locality $m_{i}-1$, which we denote by
$C_{i}(S,D)$. The code coordinates are naturally partitioned into $\widetilde{m}s$ subsets of size $m_{i}$ each: \begin{align*}
A_t^j &= \{P_{tu}^j:
1 \leq u \leq m_{i}\}\\
& =\{Q\in \P(F_{i}): Q\cap F_{i-1}=P_{t1}^j\cap F_{i_0-1}=\widetilde{P_{t}}^j\}
\end{align*} where $1\leq j \leq s$; $1\leq t \leq \widetilde{m}$ and $\widetilde{m}=m_{i-1}\ldots m_1=m/m_{i}$. Denoting by $h_j=[F_{i}:\F_q(x_j)]$, we have that the parameters of the code satisfy

\begin{itemize}
 \item $n= ms=m_{i}\ldots m_1s$,
 \item $k= \ell(m_{i}-1)m_{i-1}\cdots m_1\geq (l+1)(m_{i}-1)m_{i-1}\cdots m_1$,
 \item $d\geq n-lm-(m_1-1)h_1-\cdots-(m_{i-1}-1)h_{i-1}-(m_{i}-2)h_{i}$.
\end{itemize}
provided that the right-hand side of the inequality for $d$ is a positive integer. Local recovery of an erased symbol $f (P_{tu}^{j})$ can be performed by polynomial interpolation through the positions of the points in the recovery set $A_t^j\setminus\{P_{tu}^j\}$.

\end{thm}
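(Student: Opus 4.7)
The plan is to carry out the construction of Theorem \ref{teobargtamo} directly at the top of the tower, realizing $V$ as a subspace of a Riemann--Roch space $\mathcal{L}(A)$ in $F_i$ whose associated divisor $A$ has support disjoint from $\mathcal{B}$, and then reading off the parameters from the standard AG bounds. The length is immediate: each of the $s$ places in $S$ splits completely into $m=m_1\cdots m_i$ places of $F_i$, so $|\mathcal{B}|=ms=n$. For the dimension of $V$ itself, I first establish $\F_q$-linear independence of its generators. By a telescoping of the bases $\{1, x_j,\ldots, x_j^{m_j-1}\}$ of $F_j/F_{j-1}$, the monomials $\{x_1^{e_1}\cdots x_i^{e_i}: 0\leq e_j\leq m_j-1\}$ form an $F_0$-basis of $F_i$; restricting $e_i$ to the range $e_i\leq m_i-2$ preserves $F_0$-linear independence, and combining with the $\F_q$-basis $\{f_1,\ldots,f_\ell\}$ of $\mathcal{L}(D)\subseteq F_0$ yields $\dim_{\F_q}V = \ell(m_i-1)m_{i-1}\cdots m_1$. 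Since $F_0=\F_q(x_0)$ has genus zero, Riemann--Roch gives $\ell\geq l+1$, producing the stated lower bound on $k$.

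The heart of the argument is the pole divisor computation. Define
\[
A := \con_{F_i/F_0}(D) + \sum_{j=1}^{i-1}(m_j-1)(x_j)_\infty + (m_i-2)(x_i)_\infty,
\]
where $(x_j)_\infty$ denotes the pole divisor of $x_j$ in $F_i$, of degree $h_j=[F_i:\F_q(x_j)]$. Then $\deg A = lm + \sum_{j=1}^{i-1}(m_j-1)h_j + (m_i-2)h_i$ (using $\deg\con_{F_i/F_0}(D)=[F_i:F_0]\deg D = lm$). Every generator $f_w x_1^{e_1}\cdots x_i^{e_i}$ of $V$ lies in $\mathcal{L}(A)$: the factor $f_w\in\mathcal{L}(D)$ pulls back to $\mathcal{L}(\con_{F_i/F_0}(D))$, and each $x_j^{e_j}$ has pole divisor dominated by the corresponding summand in $A$. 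The two hypotheses on $S$ (namely $\supp D\cap S=\emptyset$ and the exclusion of $x_j$-poles in $\mathcal{B}$) guarantee $\supp A\cap\mathcal{B}=\emptyset$, so the evaluation is well defined, and any nonzero $f\in V$ has at most $\deg A$ zeros among the $ms$ points of $\mathcal{B}$. This yields $d\geq n-\deg A$, exactly the claimed bound; whenever this quantity is positive the evaluation is injective on $V$, so $k=\dim_{\F_q} V$ as computed.

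Local recoverability transfers essentially verbatim from Theorem \ref{teobargtamo}. For any fiber $A_t^j$, the values of $x_1,\ldots,x_{i-1}$ are constant (they are the residues at the common restriction $\widetilde{P_t}^j\in\P(F_{i-1})$), so the restriction of $f\in V$ to $A_t^j$ is a polynomial in $x_i$ of degree at most $m_i-2=r-1$, and Lagrange interpolation through any $r$ of the $r+1$ points of $A_t^j$ recovers the missing value. The main obstacle is really the bookkeeping in the middle paragraph: one must verify that the supports of $\con_{F_i/F_0}(D)$ and of each $(x_j)_\infty$ both miss $\mathcal{B}$, which is precisely what the two ``disjointness'' hypotheses on $S$ are designed to ensure, and one must trace where the extra terms $(m_j-1)h_j$ appear in $\deg A$ — these are the price paid, relative to the previous construction, for using $\mathcal{L}(D)$ instead of $\mathcal{L}(\con_{F_{i-1}/F_0}(D))$ and gaining the factors $m_{i-1}\cdots m_1$ in the exact dimension.
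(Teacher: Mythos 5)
Your proof is correct and follows essentially the same strategy as the paper: realize $V$ inside a Riemann--Roch space on $F_i$ whose divisor has support disjoint from $\mathcal{B}$, bound the number of zeros of a nonzero $f\in V$ by the degree of that divisor, and reduce local recovery to Lagrange interpolation in $x_i$ over each fiber. The small difference is organizational: the paper first groups $f_w x_1^{e_1}\cdots x_{i-1}^{e_{i-1}}$ into a Riemann--Roch space $\mathcal{L}(G)$ on the intermediate field $F_{i-1}$, then takes the conorm to $F_i$ and adds the $x_i$-pole term, whereas you build the divisor $A$ in $F_i$ in one step as $\con_{F_i/F_0}(D)+\sum_{j<i}(m_j-1)(x_j)_\infty + (m_i-2)(x_i)_\infty$. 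Your version is slightly cleaner: it yields the stated degree bound $lm + \sum_{j=1}^{i-1}(m_j-1)h_j + (m_i-2)h_i$ directly in terms of the $h_j=[F_i:\F_q(x_j)]$, with no need to keep track of the degrees of $x_j$-pole divisors at the intermediate level $F_{i-1}$, and it is valid without any implicit assumption that the tower is balanced.
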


\begin{proof}
Since $[E:F_0]=\prod_{j=1}^{i}[F_j:F_{j-1}]=\prod_{j=1}^{i}m_{j}$ we have that $n=|\m{B}|=sm_{i}\ldots m_1$. The dimension of the code is the dimension of the vector space $V$ (since the evaluation map is injective). 
Note also that the bound of the minimum distance is just the length of a codeword minus the maximum number of zeroes that a function $f$ in $V$ can have.
In fact, $f_wx_1^{e_1}\cdots x_{i-1}^{e_{i-1}} \in F_{i-1}$, and there is a divisor $G$ of $F_{i-1}$ of degree $(l+(m_1-1)+\cdots+(m_{i-1}-1))\widetilde{m}$ such that $$f_wx_1^{e_1}\cdots x_{i-1}^{e_{i-1}} \in \m{L}(G)$$ and if $Q\in \supp(G)$ then $Q|P$ for some $P\in \supp(D)$ or $Q|Q_j$ where $Q_j$ is the simple pole of $x_j$ in $\F_q(x_j)$, for some $j=1, \ldots, i-1$. Therefore, if we denote by $\tilde{G}$ the conorm divisor of $G$ in $F_{i}$, i.e., $\tilde{G}=\con_{F_{i}/F_{i-1}}(G)$ and $\tilde{Q}_i=\con_{F_{i}/\F_q(x_i)}(Q_i)$ the conorm divisor of the simple pole $Q_i$ of $x_i$ in $\F_q(x_i)$, we have that 
 $$f_wx_1^{e_1}\cdots x_{i}^{e_{i}} \in \m{L}(\tilde{G}+(m_{i}-2)\tilde{Q}_{i})$$ where \begin{align*}
 \deg(\tilde{G})&=[F_{i}:F_{i-1}]\deg(G)\\
 &=[l+(m_1-1)+\cdots+(m_{i-1}-1)]\widetilde{m}\,m_i\\
 &=[l+(m_1-1)+\cdots+(m_{i-1}-1)]m,
 \end{align*} 
 and $\deg(\tilde{Q}_i)=[F_{i}:\F_q(x_i)]=h$. Finally, since a function $f\in V$ can have at most $$[l+(m_1-1)+\cdots+(m_{i-1}-1)]m+(m_{i}-2)h$$ poles (and thus zeros) we have a lower bound for the minimum distance. 
Now, let us show how the recovery of the coordinate is achieved. Let $f\in V$ and $$\boldsymbol{c}=(f(P_{11}^1), \ldots, 
f(P_{\widetilde{m}m_{i_0}}^j))$$ be a codeword. Assume, that the coordinate $f(P_{tu}^j)$ is missing, for some fixed $t$, $u$ and $j$. Recall that we denote by $P_j=P_{tu}^j\cap F_0$ and $\widetilde{P_t}^j=P_{tu}^j\cap F_{i-1}$ (See Figure \ref{figu0}). The recovery set for this coordinate is the evaluation of $f$ in the points of the set $A_t^j\setminus\{P_{tu}^j\}$, in other words, $\{f(P_{tk}^j)\}_{\underset{k\neq u}{1\leq k\leq r+1}}.$ Since $f_wx_1^{e_1}\cdots x_{i-1}^{e_{i-1}} \in F_{i-1}$ the evaluation of this in any point of the recovery set is constant and equal to the evaluation in $\widetilde{P_t}^j$. So, $f(P_{tu}^j)$ can be seen as the evaluation in $P_{tu}^j$ of a polynomial in  $T$: 
$$\tilde{f}(T):=\sum_{k=0}^{m_{i}-2}b_kT^k$$ with some appropriate chosen coefficients. And the same is true for all points in the recovery set (since they are all above the same place in $F_{i-1}$). Therefore, the coefficients $b_0, \ldots, b_{m_ {i}-1}$ can be found by polynomial interpolation of the remaining $m_{i}-1$ points in the recovery set.  
\end{proof}

 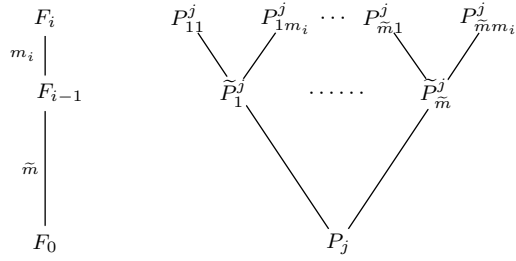
\begin{figure}[h!t]
        \begin{center}
  \begin{tikzpicture}[scale=1]
 \draw[line width=0.5 pt](-0.9,0)--(-0.9,3);
   \draw[white,  fill=white](-0.9,0) circle (0.23 cm);
   \draw[white,  fill=white](-0.9,2) circle (0.23 cm);
\draw[white,  fill=white](-0.9, 3) circle (0.23 cm);
\node at(-0.9,0){\footnotesize{$F_0$}};
\node at(-0.7,2){\footnotesize{$F_{i-1}$}};
\node at(-0.9,3){\footnotesize{$F_{i}$}};
     \draw[line width=0.5 pt](3,0)--(1,3);
     \draw[line width=0.5 pt](1.6,2)--(2.3,3);
                     \draw[line width=0.5 pt](3,0)--(5,3);
                     \draw[line width=0.5 pt](4.3,2)--(3.6,3);
        \draw[white,  fill=white](4,0) circle (0.23 cm);
  \draw[white,  fill=white](1.6,2) circle (0.23 cm);
    \draw[white,  fill=white](1,3) circle (0.23 cm);  
  \draw[white,  fill=white](2.3,3) circle (0.23 cm);    
 \draw[white,  fill=white](5,3) circle (0.23 cm);   
   \draw[white,  fill=white](3.6,3) circle (0.23 cm); 
     \draw[white,  fill=white](4.3,2) circle (0.23 cm);
 \draw[white,  fill=white](3,0) circle (0.23 cm);
   \node at(3,0){\footnotesize{$P_j$}};
 \node at(1.6,2){\footnotesize{$\widetilde{P}_1^j$}};
   \node at(3,2){\footnotesize{$\cdots\cdots$}};  
 \node at(4.3,2){\footnotesize{$\widetilde{P}_{\widetilde{m}}^j$}};         
  \node at(1,3){\footnotesize{$P_{11}^j$}};      
 \node at(2.3,3){\footnotesize{$P_{1m_{i}}^j$}};  
  \node at(2.95,3){\footnotesize{$\cdots$}};  
 \node at(3.6,3){\footnotesize{$P_{\widetilde{m}1}^j$}};      
    \node at(5,3){\footnotesize{$P_{\widetilde{m}m_{i}}^j$}};  
     \node at(-1.2,2.5){\tiny{$m_{i}$}};
          \node at(-1.1,1){\tiny{$\widetilde{m}$}};
 \end{tikzpicture}
  \caption{Diagram of an splitting place $P_j$ of $F_0$ in $F_{i}$}\label{figu0}
\end{center}\end{figure}

\begin{rem}
Notice that the elements in the set $S$ are the places that split completely in all the extensions $F_j/F_0$, $j=1, \ldots, i$ but this is not a necessary condition. Actually, the necessary condition is that the places in the set $S$ have above them places of degree one in $E$, and that we have enough of them to build a recovery set and bound the dimension of $V$ to have an injective application, but they could ramify in some intermediate extension. 
\end{rem}

\begin{example}\label{torreT}
For each $q>2$ let us consider the recursive tower $\mathcal{F}=\{F_j\}_{j=0}^{\infty}$ over $\F_{q^2}$ defined by Garcia and Stichtenoth in \cite{GS}, whose defining equation is
\begin{equation} \label{yq+y} 
y^q+y = \frac{x^q}{x^{q-1}+1}.
\end{equation}
It is known (see, for instance, \cite[Example 5.4.1]{NX}) that
$N(F_j) \geq q^{j}(q^2-q)+1$
and the genus is given by
$$ g(F_j) = \begin{cases} (q^{\frac{j+1}{2}}-1)^2 & \quad \text{for $j$ odd,} \\ 
(q^{\frac{j}{2}+1}-1)(q^{\frac{j}{2}}-1) & \quad \text{for $j$ even.}
\end{cases}$$
In this case, we have the set $$S=\{P_\alpha \in \P(F_0):\alpha \in \F_{q^2} \text{ and } \alpha^q+\alpha\neq 0\},$$ where $P_\alpha$ denotes the only single zero of $x_0-\alpha$ in the rational function field $F_0=\F_{q^2}(x_0)$. All the places in $S$ split completely in the tower, so for some $i>1$ we can consider $$\m{B}_{i}=\{Q\in \P(F_{i}):Q|P \text{ for some } P\in S\} $$ as the evaluation set. Notice that using the defining equation of the tower, is not hard to verify that the places in $\m{B}_{i}$ are not poles of $x_j$, $j=0, \ldots, i$. Moreover, the only pole of $x_0$ in $F_0$, that we denote by $P_\infty$ is totally ramified in the tower, and we can choose $D=lP_\infty$.  Since $\m{F}$ is an asymptotically good tower and every \textit{step} in the tower is of degree $q$, we have that $m=[F_{i}:F_0]=q^{i}=[F_i:\F_q^2(x_i)]=h$. Therefore, considering the set $$V_{i}=\langle x_0^{e_0}x_1^{e_1}\cdots x_{i}^{e_{i}}:0\leq e_0\leq l;\, 0\leq e_i \leq q-2\text{ and }0 \leq  e_j\leq q-1 \text{ for }j=1,\ldots, i-1\rangle,$$ we can build a locally recoverable code $C_{i}(S,D)$ with locality $q-1$, whose parameters satisfy  $$
\begin{array}{l}
 n= q^{i}(q^2-q)\\
 k= (l+1)(q-1)q^{i-1}\\
 d\geq (q^2-2q+2-l-(q-1)(i-1))q^i
\end{array}
$$ provided that the right-hand side of the third inequality for $d$ is a positive integer.

Notice that in the case  $i=1$ the same construction can be done, where $$V_1=\langle x_0^{e_0}x_1^{e_1}:0\leq e_0\leq l;\, 0\leq e_1 \leq q-2\rangle.$$
\end{example}

\begin{rem}
For $1 <i \leq q-1$ and $1 \leq l \leq (q-1)(q-i)$, the code $C_i(S,D)$ in the previous example verifies $d > 0$ since $$d \geq (q^2-2q+2-l-(q-1)(i-1))q^i = ((q-1)(q-i) - l+1)q^i > 0 $$ 
\end{rem}

\setcounter{example}{0}

\begin{example}(cont.)\label{torreTcont}
Taking into account the previous remark, if we want to maximize the dimension, we can choose $l=(q-1)(q-i)$ and in this case we obtain codes $C_{i}(S,D)$ for $i \leq q-1$ with the following relative parameters:

$$R=\frac{k}{n}= \frac{(l+1)}{q^2} \geq \frac{1}{q} \qquad \text{and} \qquad \delta =\frac{d}{n}\geq \frac{1}{q(q-1)}.$$


\end{example}

\section{A particular case for small steps}\label{sec4}

In this section we will work with the tower presented in Example~\ref{torreT}, and we will show that the bound presented in Theorem~\ref{Teounr} is sharp for the first two steps in the tower. 

\begin{pro}\label{exactmd}
	For $q\geq 5$, the code $C_2(S,D)$ of Example~\ref{torreT} with $D = qP_{\infty}$,
 is a locally recoverable code over $\F_{q^2}$, with locality $r=q-1$, whose parameters are \begin{align*}
	n&=q^2(q^2-q), \\
	k&= (q+1)q(q-1)=q^3-q, \\
	d&=q^2(q^2-q-q-(q-1)-(q-2)) = q^2(q^2-4q+3).\\
	\end{align*}
\end{pro}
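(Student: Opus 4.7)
The plan is to apply Theorem~\ref{Teounr} to the Garcia--Stichtenoth tower of Example~\ref{torreT} with $i=2$, $l=q$, and $D=qP_\infty$, to derive $n$ and $k$ exactly and to obtain the claimed lower bound on $d$; and then to exhibit an explicit codeword showing that this lower bound is attained. The combinatorial data needed are $|S|=q^2-q$ (since $S$ is the complement of $\ker(\mathrm{Tr})$ in $\F_{q^2}$), $\ell=\dim \m{L}(qP_\infty)=q+1$ (Riemann--Roch on $F_0$, which has genus $0$), and $h_1=h_2=q^2$ (as $[F_2:\F_{q^2}(x_j)]=q^2$ for $j=1,2$, a property of the GS tower already recorded in Example~\ref{torreT}). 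Substituting these into Theorem~\ref{Teounr} gives $n=q^2(q^2-q)$, $k=(q+1)(q-1)q=q^3-q$ (the value of $k$ being exact because the evaluation map is injective: any nonzero $f$ in its kernel would have pole-divisor degree at most $3q^2(q-1)<n$ for $q\geq 4$, a contradiction), and
$$
d\geq q^2(q^2-q)-q\cdot q^2-(q-1)q^2-(q-2)q^2=q^2(q^2-4q+3).
$$

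For the matching upper bound on $d$, the plan is to construct a function $f\in V$ whose codeword has weight exactly $q^2(q^2-4q+3)$, equivalently a function with precisely $3q^2(q-1)$ zeros on $\m{B}$. The candidate is
$$
f=p_0(x_0)\,p_1(x_1)\,p_2(x_2),
$$
with $p_j\in\F_{q^2}[T]$ of degree $q-j$ for $j=0,1,2$, and roots chosen inside the image of $x_j$ on $\m{B}$, so that each zero of $p_j(x_j)$ is a place of $\m{B}$. The pole divisor of each factor is supported outside $\m{B}$ (on the unique ramified place above $P_\infty$ and above the ramified locus of the tower), and the total pole-divisor degree of $f$ is $q\cdot q^2+(q-1)q^2+(q-2)q^2=3q^2(q-1)$; hence the zero divisor also has degree $3q^2(q-1)$. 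Provided the zero sets $Z_0,Z_1,Z_2\subset\m{B}$ of the three factors are pairwise disjoint with simple zeros, the associated codeword attains the bound.

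The technical heart of the proof, and the reason for the hypothesis $q\geq 5$, lies in producing such a choice of $p_0,p_1,p_2$. For each $P_\alpha\in S$ the fibre of $x_1$ above $P_\alpha$ is a coset $B_\alpha$ of $\ker(\mathrm{Tr})$ of cardinality $q$, determined by the value $h(\alpha):=\alpha^q/(\alpha^{q-1}+1)\in\F_q\setminus\{0\}$; the function $h$ partitions $S$ into $q-1$ fibres of $q$ elements each. Taking the roots of $p_0$ to be a single fibre $A=h^{-1}(c)$ collapses $\bigcup_{\alpha\in A}B_\alpha$ into a single coset, making room to pick the roots of $p_1$ disjoint from it. An analogous analysis at the next level---exploiting that the restriction of $h$ to a coset of $\ker(\mathrm{Tr})$ is a non-degenerate quadratic over $\F_q$ with image of size $(q+1)/2$---provides the room needed for the roots of $p_2$ to avoid $Z_0\cup Z_1$. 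Verifying that the resulting counting inequalities hold is the main obstacle, and $q\geq 5$ is precisely the threshold at which they do so uniformly.
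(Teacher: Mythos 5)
Your proposal is correct and follows essentially the same route as the paper's proof: read the lower bound on $d$ off Theorem~\ref{Teounr} with $i=2$ and $l=q$, then exhibit $f=h_0(x_0)\,h_1(x_1)\,h_2(x_2)$ of degrees $q,\,q-1,\,q-2$ whose zero sets on $\m{B}$ are pairwise disjoint, exploiting the partition of $S_0$ into the fibres $S_i$ of $\alpha\mapsto\alpha^{q+1}/(\alpha^q+\alpha)$ and the fact that this map has image of size $(q+1)/2$ on each trace coset $B_i$. The one point your sketch leaves implicit, and which the paper spells out, is that the roots of $h_1$ are taken inside the single fibre $S_i$ meeting $B_1$ in exactly one element; this is what caps the set of forbidden $x_2$-values at $q(q+1)/2$ and yields the threshold $q\geq 5$.
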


\begin{proof}

The set $ S_0=\{\alpha \in \F_{q^2}: \alpha^q+\alpha\neq 0\},$ can be naturally partitioned into $q-1$ disjoint subsets \begin{equation}\label{Si} S_i=\left\{ \alpha \in \F_{q^2}: \dfrac{\alpha^{q}}{\alpha^{q-1}+1} =\dfrac{\alpha^{q+1}}{\alpha^{q}+\alpha} =: \beta_i \right\}, \quad \beta_i \in \F_{q}^*\end{equation} each one of size  $q$. 
For a rational place $P \in \P(F_j)$, we denote $$N(P) = N_{\F_{q^2}/ \F_{q}}( x_j(P)) = x_j(P)^{q+1}$$ and $$Tr(P) = Tr_{\F_{q^2}/ \F_{q}}( x_j(P)) = x_j(P)^q + x_j(P),$$ where $x_j$ is the trascendental element over $\F_{q^2}$ such that $F_j= F_{j-1}(x_j)$, for every step $j$ in the tower.
Using the defining equation of the tower, we have that if $Q$ is a place of $F_j$  over a place $P$ of $F_{j-1}$ such that $x_{j-1}(P) \in S_i$, then $$Tr(Q) = \dfrac{N(P)}{Tr(P)}= \beta_i.$$ %
In particular we obtain that, for each place $P \in \mathcal{B}\cap \P(F_{j-1})$ and each $S_i$, there are at most two places $Q_1$ and $Q_2$ of $\mathcal{B}$ such that $$x_{j-1}(Q_1)=x_{j-1}(Q_2)=x_{j-1}(P) \in S_i,$$ since there are at most two elements in $\F_{q^2}$ with the same norm and trace. Moreover, if $\sigma$ is the only non trivial automorphism in $\F_{q^2}/\F_{q}$, we have that $$x_j(Q_1)=\sigma(x_j(Q_2)),$$ and since $q$ is odd, there is exactly one remaining place $Q|P$ such that $$x_{j-1}(Q)=x_{j-1}(P) \in S_i,\quad \text{and}\quad\sigma(x_j(Q))=x_j(Q).$$ In other words, if we consider the $q-1$ disjoint sets \begin{equation}\label{Bi} B_i=\{\alpha \in \F_{q^2}: Tr(\alpha) = \beta_i\}, \qquad \text{for } \beta_i \in \F_{q}^*,\end{equation} 
we have that
for any indexes $1\leq i,k \leq q-1$, \begin{equation}\label{equ1}
|S_k\cap B_i|=|\{Q\in \P(F_j):x_{j}(Q) \in S_k \text{ and }x_j(Q)\in B_i\}| \leq 2\end{equation} and \begin{equation}\label{equ2}|\{Q\in \P(F_j):x_{j}(Q) \in S_k \text{ and }x_j(Q)\in B_i\}| =1 \,\, \text{ if and only if }\,\, x_j(Q)\in \F_q.\end{equation}

Now,  using the notation in Theorem~\ref{Teounr}, we take $i=2$ and $l=q$ and we consider $D = lP_{\infty}$ so that $V$ is generated by $$\{x_0^{e_0}x_1^{e_1}x_2^{e_2} : 0 \leq e_0 \leq q; 0 \leq e_1 \leq q-1; 0 \leq e_2 \leq q-2 \}.$$ From the proof of the theorem we have that any function on $V$ has at most $$(l+1)q^2+q^3 + (q-1)q^2=q^2(3q-3)$$ zeros, and we will now exhibit a function that actually achieves that number of zeros.
Consider an element $\beta_1 \in \F_{q^2}^*$ and set $$h_0 = \prod_{\alpha \in S_1}\left( x_0-\alpha \right),$$ so we have exactly $q^3=|S_1|[F_2:F_0]$ places $Q \in \m{B}$ that are zeroes of $h_0$. In other words, for any of these places we have that $x_0(Q) \in S_i$ and  $x_1(Q) \in B_1$, by the recursive definition of the tower. 
Now since $q$ is odd, chose $\beta_i$ such that $$|S_i\cap B_1|= 1$$  and set $$h_1 = \prod_{\alpha \in S_i  \setminus B_1} \left( x_1-\alpha \right),$$ so we have $(q-1)q^2=(|S_i|-1)[F_2:F_1]|\{Q\in F_1: x_1(Q)=\alpha\}|$ new places $Q \in \m{B}$ that are zeroes of $h_1$. 

Since $|B_1|=q$, from \eqref{equ1} and \eqref{equ2}, we can write $B_1$ as a disjoint union of $(q+1)/2$ non empty sets $$B_1=\bigcup_{k=1}^{\frac{q+1}{2}}B_1\cap S_{i_k}$$ and therefore if $Q\in \mathcal{B}$ is such that $x_0(Q)\in S_1$ or $x_1(Q)=S_{i}$ then $x_2(Q)$ is one of the $q(q+1)/2$ different values in $\F_{q^2}$. 
Now, since $q\geq 5$ implies that $q \left( \frac{q+1}{2} \right)< q^2-q-(q-2)$ we can choose $q-2$ values $\gamma_1, \ldots, \gamma_{q-2}$ in $S_0$ such that $x_2(Q)=\gamma_i$ for $Q\in F_2$ and $$h_2 = \prod_{i=1}^{q-2}  \left( x_2-\gamma_i \right),$$ has $q^2(q-2)$ new different zeros in $\mathcal{B}$. 

Therefore, $f = h_0h_1h_2 \in V$ has exactly $q^3+(q-1)q^2 + (q-2)q^2 = q^2(3q-3)$ zeroes, attaining the bound provided by Theorem~\ref{Teounr}. 
\end{proof}

 \begin{figure}[h!t]
	\begin{center}
		\begin{tikzpicture}[scale = 1.5]
			\clip(-1.4,-0.2) rectangle (5,3.4);
			\draw [line width=2pt,color=wrwrwr] (-0.8,2)-- (0,0);
			\draw [line width=2pt,color=wrwrwr] (-0.4,2)-- (0,0);
			\draw [line width=2pt,color=wrwrwr] (0,1.8)-- (0,0);
			\draw [line width=2pt,color=wrwrwr] (0.4,1.8)-- (0,0);
			\draw [line width=2pt,color=wrwrwr] (0.8,1.6)-- (0,0);
			\draw [line width=2pt,color=wrwrwr] (1.2,2)-- (2,0);
			\draw [line width=2pt,color=wrwrwr] (1.6,2)-- (2,0);
			\draw [line width=2pt,color=wrwrwr] (2,1.8)-- (2,0);
			\draw [line width=2pt,color=wrwrwr] (2.4,1.8)-- (2,0);
			\draw [line width=2pt,color=wrwrwr] (2.8,1.6)-- (2,0);
			\draw [line width=2pt,color=wrwrwr] (3.2,2)-- (4,0);
			\draw [line width=2pt,color=wrwrwr] (3.6,2)-- (4,0);
			\draw [line width=2pt,color=wrwrwr] (4,1.8)-- (4,0);
			\draw [line width=2pt,color=wrwrwr] (4.4,1.8)-- (4,0);
			\draw [line width=2pt,color=wrwrwr] (4.8,1.6)-- (4,0);
			\draw [line width=2pt,color=wrwrwr] (-1,2.4)-- (-0.8,2);
			\draw [line width=2pt,color=wrwrwr] (-1,2.6)-- (-0.8,2);
			\draw [line width=2pt,color=wrwrwr] (-1,2.8)-- (-0.8,2);
			\draw [line width=2pt,color=wrwrwr] (-1,3)-- (-0.8,2);
			\draw [line width=2pt,color=wrwrwr] (-1,3.2)-- (-0.8,2);
			\draw [line width=2pt,color=wrwrwr] (-0.6,2.4)-- (-0.4,2);
			\draw [line width=2pt,color=wrwrwr] (-0.6,2.6)-- (-0.4,2);
			\draw [line width=2pt,color=wrwrwr] (-0.6,3)-- (-0.4,2);
			\draw [line width=2pt,color=wrwrwr] (-0.6,3.2)-- (-0.4,2);
			\draw [line width=2pt,color=wrwrwr] (-0.6,2.8)-- (-0.4,2);
			\draw [line width=2pt,color=wrwrwr] (-0.2,2.4)-- (0,1.8);
			\draw [line width=2pt,color=wrwrwr] (-0.2,2.6)-- (0,1.8);
			\draw [line width=2pt,color=wrwrwr] (-0.2,2.8)-- (0,1.8);
			\draw [line width=2pt,color=wrwrwr] (-0.2,3)-- (0,1.8);
			\draw [line width=2pt,color=wrwrwr] (-0.2,3.2)-- (0,1.8);
			\draw [line width=2pt,color=wrwrwr] (0.2,2.4)-- (0.4,1.8);
			\draw [line width=2pt,color=wrwrwr] (0.2,2.6)-- (0.4,1.8);
			\draw [line width=2pt,color=wrwrwr] (0.2,2.8)-- (0.4,1.8);
			\draw [line width=2pt,color=wrwrwr] (0.2,3)-- (0.4,1.8);
			\draw [line width=2pt,color=wrwrwr] (0.2,3.2)-- (0.4,1.8);
			\draw [line width=2pt,color=wrwrwr] (0.6,2.4)-- (0.8,1.6);
			\draw [line width=2pt,color=wrwrwr] (0.6,2.6)-- (0.8,1.6);
			\draw [line width=2pt,color=wrwrwr] (0.6,3)-- (0.8,1.6);
			\draw [line width=2pt,color=wrwrwr] (0.6,2.8)-- (0.8,1.6);
			\draw [line width=2pt,color=wrwrwr] (0.6,3.2)-- (0.8,1.6);
			\draw [line width=2pt,color=wrwrwr] (1,2.4)-- (1.2,2);
			\draw [line width=2pt,color=wrwrwr] (1,2.6)-- (1.2,2);
			\draw [line width=2pt,color=wrwrwr] (1,2.8)-- (1.2,2);
			\draw [line width=2pt,color=wrwrwr] (1,3)-- (1.2,2);
			\draw [line width=2pt,color=wrwrwr] (1,3.2)-- (1.2,2);
			\draw [line width=2pt,color=wrwrwr] (1.4,2.4)-- (1.6,2);
			\draw [line width=2pt,color=wrwrwr] (1.4,2.6)-- (1.6,2);
			\draw [line width=2pt,color=wrwrwr] (1.4,3)-- (1.6,2);
			\draw [line width=2pt,color=wrwrwr] (1.4,3.2)-- (1.6,2);
			\draw [line width=2pt,color=wrwrwr] (1.4,2.8)-- (1.6,2);
			\draw [line width=2pt,color=wrwrwr] (1.8,2.4)-- (2,1.8);
			\draw [line width=2pt,color=wrwrwr] (1.8,2.6)-- (2,1.8);
			\draw [line width=2pt,color=wrwrwr] (1.8,2.8)-- (2,1.8);
			\draw [line width=2pt,color=wrwrwr] (1.8,3)-- (2,1.8);
			\draw [line width=2pt,color=wrwrwr] (1.8,3.2)-- (2,1.8);
			\draw [line width=2pt,color=wrwrwr] (2.2,2.4)-- (2.4,1.8);
			\draw [line width=2pt,color=wrwrwr] (2.2,2.6)-- (2.4,1.8);
			\draw [line width=2pt,color=wrwrwr] (2.2,2.8)-- (2.4,1.8);
			\draw [line width=2pt,color=wrwrwr] (2.2,3)-- (2.4,1.8);
			\draw [line width=2pt,color=wrwrwr] (2.2,3.2)-- (2.4,1.8);
			\draw [line width=2pt,color=wrwrwr] (2.6,2.4)-- (2.8,1.6);
			\draw [line width=2pt,color=wrwrwr] (2.6,2.6)-- (2.8,1.6);
			\draw [line width=2pt,color=wrwrwr] (2.6,3)-- (2.8,1.6);
			\draw [line width=2pt,color=wrwrwr] (2.6,2.8)-- (2.8,1.6);
			\draw [line width=2pt,color=wrwrwr] (2.6,3.2)-- (2.8,1.6);
			\draw [line width=2pt,color=wrwrwr] (3,2.4)-- (3.2,2);
			\draw [line width=2pt,color=wrwrwr] (3,2.6)-- (3.2,2);
			\draw [line width=2pt,color=wrwrwr] (3,2.8)-- (3.2,2);
			\draw [line width=2pt,color=wrwrwr] (3,3)-- (3.2,2);
			\draw [line width=2pt,color=wrwrwr] (3,3.2)-- (3.2,2);
			\draw [line width=2pt,color=wrwrwr] (3.4,2.4)-- (3.6,2);
			\draw [line width=2pt,color=wrwrwr] (3.4,2.6)-- (3.6,2);
			\draw [line width=2pt,color=wrwrwr] (3.4,2.8)-- (3.6,2);
			\draw [line width=2pt,color=wrwrwr] (3.4,3)-- (3.6,2);
			\draw [line width=2pt,color=wrwrwr] (3.4,3.2)-- (3.6,2);
			\draw [line width=2pt,color=wrwrwr] (3.8,2.4)-- (4,1.8);
			\draw [line width=2pt,color=wrwrwr] (3.8,2.6)-- (4,1.8);
			\draw [line width=2pt,color=wrwrwr] (3.8,2.8)-- (4,1.8);
			\draw [line width=2pt,color=wrwrwr] (3.8,3)-- (4,1.8);
			\draw [line width=2pt,color=wrwrwr] (3.8,3.2)-- (4,1.8);
			\draw [line width=2pt,color=wrwrwr] (4.2,2.4)-- (4.4,1.8);
			\draw [line width=2pt,color=wrwrwr] (4.2,2.6)-- (4.4,1.8);
			\draw [line width=2pt,color=wrwrwr] (4.2,2.8)-- (4.4,1.8);
			\draw [line width=2pt,color=wrwrwr] (4.2,3)-- (4.4,1.8);
			\draw [line width=2pt,color=wrwrwr] (4.2,3.2)-- (4.4,1.8);
			\draw [line width=2pt,color=wrwrwr] (4.8,1.6)-- (4.6,2.4);
			\draw [line width=2pt,color=wrwrwr] (4.6,2.6)-- (4.8,1.6);
			\draw [line width=2pt,color=wrwrwr] (4.8,1.6)-- (4.6,2.8);
			\draw [line width=2pt,color=wrwrwr] (4.6,3)-- (4.8,1.6);
			\draw [line width=2pt,color=wrwrwr] (4.8,1.6)-- (4.6,3.2);
			\begin{scriptsize}
				\draw [fill=rvwvcq] (2,0) circle (2.5pt);
				\draw [fill=sexdts] (4,0) circle (2.5pt);
				\draw [fill=rvwvcq] (0,0) circle (2.5pt);
				\draw [fill=rvwvcq] (-0.8,2) circle (2.5pt);
				\draw [fill=rvwvcq] (-0.4,2) circle (2.5pt);
				\draw [fill=sexdts] (0,1.8) circle (2.5pt);
				\draw [fill=sexdts] (0.4,1.8) circle (2.5pt);
				\draw [fill=dtsfsf] (0.8,1.6) circle (2.5pt);
				\draw [fill=sexdts] (2,1.8) circle (2.5pt);
				\draw [fill=rvwvcq] (1.6,2) circle (2.5pt);
				\draw [fill=rvwvcq] (1.2,2) circle (2.5pt);
				\draw [fill=sexdts] (2.4,1.8) circle (2.5pt);
				\draw [fill=dtsfsf] (2.8,1.6) circle (2.5pt);
				\draw [fill=sexdts] (3.2,2) circle (2.5pt);
				\draw [fill=sexdts] (3.6,2) circle (2.5pt);
				\draw [fill=dtsfsf] (4,1.8) circle (2.5pt);
				\draw [fill=dtsfsf] (4.4,1.8) circle (2.5pt);
				\draw [fill=black] (4.8,1.6) circle (2.5pt);
				\draw [fill=sexdts] (-1,2.6) circle (2.5pt);
				\draw [fill=sexdts] (-1,2.8) circle (2.5pt);
				\draw [fill=rvwvcq] (-1,3) circle (2.5pt);
				\draw [fill=rvwvcq] (-1,3.2) circle (2.5pt);
				\draw [fill=dtsfsf] (-1,2.4) circle (2.5pt);
				\draw [fill=dtsfsf] (-0.6,2.4) circle (2.5pt);
				\draw [fill=sexdts] (-0.6,2.6) circle (2.5pt);
				\draw [fill=sexdts] (-0.6,2.8) circle (2.5pt);
				\draw [fill=rvwvcq] (-0.6,3) circle (2.5pt);
				\draw [fill=rvwvcq] (-0.6,3.2) circle (2.5pt);
				\draw [fill=black] (-0.2,2.4) circle (2.5pt);
				\draw [fill=dtsfsf] (-0.2,2.6) circle (2.5pt);
				\draw [fill=dtsfsf] (-0.2,2.8) circle (2.5pt);
				\draw [fill=sexdts] (-0.2,3) circle (2.5pt);
				\draw [fill=sexdts] (-0.2,3.2) circle (2.5pt);
				\draw [fill=black] (0.2,2.4) circle (2.5pt);
				\draw [fill=dtsfsf] (0.2,2.6) circle (2.5pt);
				\draw [fill=dtsfsf] (0.2,2.8) circle (2.5pt);
				\draw [fill=sexdts] (0.2,3) circle (2.5pt);
				\draw [fill=sexdts] (0.2,3.2) circle (2.5pt);
				\draw [fill=rvwvcq] (0.6,2.4) circle (2.5pt);
				\draw [fill=black] (0.6,2.6) circle (2.5pt);
				\draw [fill=black] (0.6,2.8) circle (2.5pt);
				\draw [fill=dtsfsf] (0.6,3) circle (2.5pt);
				\draw [fill=dtsfsf] (0.6,3.2) circle (2.5pt);
				\draw [fill=dtsfsf] (1,2.4) circle (2.5pt);
				\draw [fill=sexdts] (1,2.6) circle (2.5pt);
				\draw [fill=sexdts] (1,2.8) circle (2.5pt);
				\draw [fill=rvwvcq] (1,3) circle (2.5pt);
				\draw [fill=rvwvcq] (1,3.2) circle (2.5pt);
				\draw [fill=dtsfsf] (1.4,2.4) circle (2.5pt);
				\draw [fill=sexdts] (1.4,2.6) circle (2.5pt);
				\draw [fill=rvwvcq] (1.4,3) circle (2.5pt);
				\draw [fill=rvwvcq] (1.4,3.2) circle (2.5pt);
				\draw [fill=sexdts] (1.4,2.8) circle (2.5pt);
				\draw [fill=black] (1.8,2.4) circle (2.5pt);
				\draw [fill=dtsfsf] (1.8,2.6) circle (2.5pt);
				\draw [fill=dtsfsf] (1.8,2.8) circle (2.5pt);
				\draw [fill=sexdts] (1.8,3) circle (2.5pt);
				\draw [fill=sexdts] (1.8,3.2) circle (2.5pt);
				\draw [fill=black] (2.2,2.4) circle (2.5pt);
				\draw [fill=dtsfsf] (2.2,2.6) circle (2.5pt);
				\draw [fill=dtsfsf] (2.2,2.8) circle (2.5pt);
				\draw [fill=sexdts] (2.2,3) circle (2.5pt);
				\draw [fill=sexdts] (2.2,3.2) circle (2.5pt);
				\draw [fill=rvwvcq] (2.6,2.4) circle (2.5pt);
				\draw [fill=black] (2.6,2.6) circle (2.5pt);
				\draw [fill=dtsfsf] (2.6,3) circle (2.5pt);
				\draw [fill=black] (2.6,2.8) circle (2.5pt);
				\draw [fill=dtsfsf] (2.6,3.2) circle (2.5pt);
				\draw [fill=black] (3,2.4) circle (2.5pt); 
				\draw [fill=dtsfsf] (3,2.6) circle (2.5pt);
				\draw [fill=dtsfsf] (3,2.8) circle (2.5pt);
				\draw [fill=sexdts] (3,3) circle (2.5pt);
				\draw [fill=sexdts] (3,3.2) circle (2.5pt);
				\draw [fill=black] (3.4,2.4) circle (2.5pt);
				\draw [fill=dtsfsf] (3.4,2.6) circle (2.5pt);
				\draw [fill=dtsfsf] (3.4,2.8) circle (2.5pt);
				\draw [fill=sexdts] (3.4,3) circle (2.5pt);
				\draw [fill=sexdts] (3.4,3.2) circle (2.5pt);
				\draw [fill=rvwvcq] (3.8,2.4) circle (2.5pt);
				\draw [fill=black] (3.8,2.6) circle (2.5pt);
				\draw [fill=black] (3.8,2.8) circle (2.5pt);
				\draw [fill=dtsfsf] (3.8,3) circle (2.5pt);
				\draw [fill=dtsfsf] (3.8,3.2) circle (2.5pt);
				\draw [fill=rvwvcq] (4.2,2.4) circle (2.5pt);
				\draw [fill=black] (4.2,2.6) circle (2.5pt);
				\draw [fill=black] (4.2,2.8) circle (2.5pt);
				\draw [fill=dtsfsf] (4.2,3) circle (2.5pt);
				\draw [fill=dtsfsf] (4.2,3.2) circle (2.5pt);
				\draw [fill=sexdts] (4.6,2.4) circle (2.5pt);
				\draw [fill=rvwvcq] (4.6,2.6) circle (2.5pt);
				\draw [fill=rvwvcq] (4.6,2.8) circle (2.5pt);
				\draw [fill=black] (4.6,3) circle (2.5pt);
				\draw [fill=black] (4.6,3.2) circle (2.5pt);
			\end{scriptsize}
		\end{tikzpicture}
		\caption{Diagram of three splitting places $P_j$ of $F_0$ in $F_2/F_0$, for $q=5$. Each colour, in each function field, represent a set $S_i$, for $1 \leq i \leq 4=q-1$}\label{figuplaces}
\end{center}\end{figure}
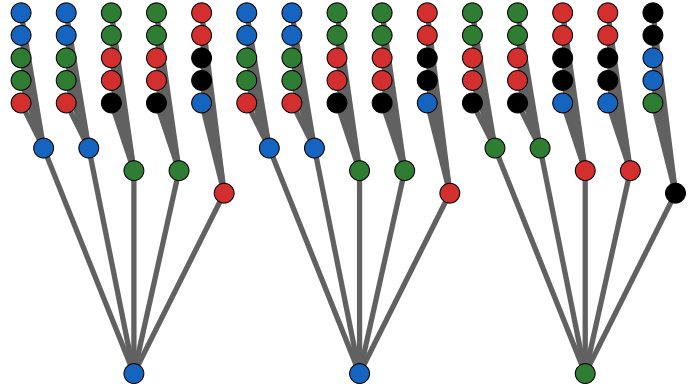

\begin{pro}\label{prop42}
	For each odd $q\geq 5$, the code $C_2(S,D)$ in Example~\ref{torreT} with $D = \frac{q(q-1)}{2}P_{\infty}$,
 is a locally recoverable code over $\F_{q^2}$, with locality $r=q-1$, whose parameters are 
  \begin{align*}
	n&=q^2(q^2-q), \\
	k&=q(q-1) \left(q \left(\frac{q-1}{2}-1 \right) +1\right), \\
	d&=\frac{1}{2}q^2\left(q^2 -3q +6 \right).\\
	\end{align*}

\end{pro}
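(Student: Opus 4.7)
The plan is to mirror the strategy of Proposition~\ref{exactmd}: apply Theorem~\ref{Teounr} to the tower of Example~\ref{torreT} with $i=2$ to read off $n=q^2(q^2-q)$ and $k=(l+1)(q-1)q$, where $l=\deg D$, together with the lower bound $d\geq q^2(q^2-3q+3-l)$. Matching these to the stated parameters identifies $l=\tfrac{q(q-3)}{2}$ (so that $k=\tfrac{q(q-1)^2(q-2)}{2}$ and $d\geq\tfrac{q^2(q^2-3q+6)}{2}$), and it remains to prove this bound is sharp by exhibiting an element $f\in V$ whose evaluation vector has exactly $n-d=\tfrac{q^2(q+3)(q-2)}{2}$ zeros in $\m B$.

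Following the template used in the proof of Proposition~\ref{exactmd}, I would look for $f=h_0(x_0)h_1(x_1)h_2(x_2)\in V$ of the shape
\[
h_0=\prod_{\beta\in\mathcal T_0}\prod_{\alpha\in S_\beta}(x_0-\alpha),\qquad h_1=\prod_{\gamma\in T_1}(x_1-\gamma),\qquad h_2=\prod_{\delta\in T_2}(x_2-\delta),
\]
with $|\mathcal T_0|=(q-3)/2$, $|T_1|=q-1$ and $|T_2|=q-2$, so that the degree bounds on $x_0,x_1,x_2$ defining $V$ are met with equality. The key structural facts already used in Proposition~\ref{exactmd}---that $x_1(Q)\in B_\beta$ whenever $x_0(Q)\in S_\beta$, $x_2(Q)\in B_\mu$ whenever $x_1(Q)\in S_\mu$, and the bound $|S_k\cap B_j|\leq 2$ from \eqref{equ1}--\eqref{equ2}---allow me to choose $\mathcal T_0$, $T_1$, $T_2$ so that the three zero sets are pairwise disjoint.

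Concretely, $h_0$ contributes $\tfrac{q(q-3)}{2}\cdot q^2=\tfrac{q^3(q-3)}{2}$ zeros in $\m B$. Choosing $T_1\subset \bigcup_{\beta\notin\mathcal T_0}B_\beta$ forces every zero of $h_1$ in $\m B$ to lie over some $P_\alpha$ with $\alpha\in S_\beta$, $\beta\notin\mathcal T_0$, making it disjoint from the zero set of $h_0$ and adding $(q-1)q^2$ new zeros. The analogous but more delicate step is the choice of $T_2$, which is the main obstacle: the forbidden set of $x_2$-values (those forcing an overlap with the zeros of $h_0$ or $h_1$) must be small enough that at least $q-2$ admissible $\delta$'s remain, and a direct count using $|S_k\cap B_j|\leq 2$ together with the same even/odd bookkeeping used in Proposition~\ref{exactmd} (writing a suitable $B_\mu$ as a disjoint union of sets $B_\mu\cap S_{i_k}$) shows this holds exactly under the hypothesis $q\geq 5$. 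Once $T_2$ is chosen, $h_2$ contributes $(q-2)q^2$ further disjoint zeros, and summing yields
\[
\tfrac{q^3(q-3)}{2}+q^2(q-1)+q^2(q-2)=\tfrac{q^2(q^2+q-6)}{2}=\tfrac{q^2(q+3)(q-2)}{2}=n-d,
\]
so $f=h_0h_1h_2$ attains the lower bound of Theorem~\ref{Teounr} and the proposition follows. Local recovery of the code coordinates is inherited verbatim from Theorem~\ref{Teounr}, so no extra work is needed there.
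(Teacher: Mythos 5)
Your proposal follows the paper's strategy almost exactly --- apply Theorem~\ref{Teounr}, then exhibit $f=h_0h_1h_2$ with factor degrees $q(q-3)/2$, $q-1$, $q-2$ whose three zero sets in $\m{B}$ are pairwise disjoint --- and the arithmetic of the zero counts is right. You also correctly back out $l=q(q-3)/2$ from the stated $k$; note that this makes the proposition statement itself inconsistent, since $D=\frac{q(q-1)}{2}P_\infty$ has degree $q(q-1)/2\neq q(q-3)/2$. The proof (and Proposition~\ref{propejemplo}, which uses this code with $l=q\left(\frac{q-1}{2}-1\right)$) confirms that the intended divisor is $D=q\left(\frac{q-1}{2}-1\right)P_\infty$; you should flag this.

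Where your sketch is too loose is the order and precision of the combinatorial choices. You treat $\mathcal{T}_0$ as chosen first and then assert suitable $T_1,T_2$ can be found ``by a direct count,'' but $h_0$-vs-$h_2$ disjointness does not hold for an arbitrary $\mathcal{T}_0$. The paper fixes a single subset $S_1$ at the outset, takes $H_2\subset B_1$ (so that $x_2(Q)\in H_2$ forces $x_1(Q)\in S_1$), and chooses $\mathbb{H}_0=\{P_\alpha : x_1(P)\notin S_1 \text{ for all } P\mid P_\alpha\}$, which unwinds precisely to $\bigcup_{k:\,B_k\cap S_1=\emptyset}S_k$; the conjugate-pair bookkeeping of Proposition~\ref{exactmd} shows this is a union of exactly $(q-3)/2$ of the $S_k$'s. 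With this $\mathcal{T}_0$, if $x_1(Q)\in S_1$ then $x_0(Q)$ lies in some $S_k$ with $B_k\cap S_1\neq\emptyset$, hence $x_0(Q)\notin H_0$, and the $h_0$-vs-$h_2$ disjointness comes for free. Your constraint $T_1\subset\bigcup_{\beta\notin\mathcal{T}_0}B_\beta$ handles $h_0$-vs-$h_1$ disjointness, but you are missing the additional requirement $T_1\cap S_1=\emptyset$, which is what keeps the $h_1$- and $h_2$-zero sets apart; the paper imposes both conditions on $H_1$ and checks that $(q^2-q)-q-q\left(\frac{q-1}{2}-1\right)\geq q$ leaves enough room to pick $q-1$ admissible values. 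Those explicit choices are the substance of the proof, and deferring them to ``a direct count shows this under $q\geq 5$'' leaves a genuine gap.
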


\begin{proof}
	We follow the same notation as in the previous proposition and we only consider two steps in the tower. 
	Chose one subset, let us say $S_1$, which we know it has $q$ elements.  So, $$|\{P \in \P(F_1):x_1(P)\in S_1\}|=q^2$$ since for each value $x_1(P) \in S_1$ we have $q$ different places $P_\alpha \in \P(F_0)$ under $P$. Since  these places in $F_1$ 
	come ``in pairs" (with the only exceptions of places $P$ with $x_1(P) \in \F_q$) and there are exactly $q$ places $P_{\alpha} \in \P(F_0)$ with only one place $P|P_{\alpha}$ and $x_1(P)$ $ \in S_1$, 
	we have that there are exactly  $ \frac{q^2-q}{2} = q \frac{q-1}{2}$ places $P_{\alpha} \in \P(F_0)$ with two distinct places $P|P_{\alpha}$, $\sigma(P)|P_{\alpha}$ and $x_1(P) \in S_1$. Now we choose the remaining $q^2-q-q\frac{q-1}{2}-q = q\frac{q-1}{2}-q$ places $P_{\alpha} \in S\subset \P(F_0)$ such that for all $P\in \P(F_1)$ with $P| P_{\alpha}$ we have that $x_1(P) \not\in S_1$. Set then
	$$\mathbb{H}_0 = \{P_\alpha \in S: \: x_1(P) \not\in S_1  \text{ for all }P | P_{\alpha}\},$$ of cardinality $q \left(\frac{q-1}{2}-1\right)$, and put $H_0=\{x_0(P): P \in \mathbb{H}_0\}$.
	
	Now, defining $$h_0 = \prod_{s \in H_0} (x_0 -s) \in V$$ we have that $h_0$ has $q^3 \left(\frac{q-1}{2}-1\right) = \frac{q^4-3q^3}{2}$ different zeros in $\m{B}$. 
	
	Now, among all the $q^2(q^2-q)$ places $Q$ in $\m{B} \subset \P(F_2)$, such that $Q\cap F_0 \in S\setminus \mathbb{H}_0 $, the value $x_1(Q)$ can attain $q^2-q$ possible values, $q$ of those are in $S_1$. Since $$|\{x_1(Q):Q\in \m{B} \text{ and }Q\cap F_0 \in \mathbb{H}_0 \}|=q \left(\frac{q-1}{2}-1\right)$$ (recall that the places are naturally partitioned into subsets of size $q$ and inside each subset these places decompose in the same way) and $q \geq 5$, then we have $$(q^2-q)-q - q\left(\frac{q-1}{2}-1\right) \geq q,$$ so we can build a non empty set $H_1\subset \F_{q^2}$ of cardinality $q-1$ such that $$ H_1\subset \{x_1(Q): Q\in F_1, Q\cap F_0 \not\in H_0 \text{ and } x_1(Q) \not\in S_1\}  
	$$ 
	and  this allow us to form a function $$h_1 = \prod_{s \in H_1} (x_1 - s) \in V$$  of degree $q-1$ with $q^2(q-1)$ zeroes in $\m{B}$.
	
	Finally, since no point $Q$ that is a zero of $h_0$ or $h_1$ verifies $x_1(Q) \in S_1$ by construction of $H_0$ and $H_1$, we can consider some subset $H_2 \subset \{ x_2(Q): \: x_1(Q) \in S_1 \} = B_1 \subset \F_{q^2}$ of size $q-2$ and set $$h_2 = \prod_{s \in H_2} (x_2 - s)$$ of degree $q-2$ with $q^2(q-2)$ distinct zeroes in $\m{B}$.
	
	Since the subsets $\{Q \in \m{B}: x_0(Q) \in H_0 \}, \{Q \in \m{B}: x_1(Q) \in H_1 \}, \{Q \in \m{B}: x_2(Q) \in H_2 \}$ are pairwise disjoints by construction of $H_0,H_1,H_2$, we obtain that $f = h_0h_1h_2 \in V$ has exactly $\frac{q^4-3q^3}{2}+(q-1)q^2 +q^2(q-2) =  \frac{1}{2} q^2 \left( q^2 + q  -6 \right)$ zeroes, and thus the minimum distance is exactly \begin{align*} q^2(q^2-q) - \frac{1}{2} q^2 \left( q^2 + q  -6 \right) & = \frac{1}{2}q^2 (q^2-3q+6) \\ & = q^2 \left(q^2 -2q+2 -q  \left( \frac{q-1}{2} -1 \right) - (q-1) \right)\end{align*} 
	attaining the bound provided by the theorem and the example. 
	
\end{proof}

The above results show that, in the considered tower, the lower bound for the minimum distance was attained for a small step in the tower, that is, considering the extension $F_2$ of $F_0$ over $\F_{q^2}$. Next, we show that in further steps (i.e., considering extensions $F_j/F_i$ for $j \geq i+3$) there are common zeroes of $x_j$ and $x_i$, so the bound will never be attained for this tower.

In the following lemma, we use the same notation as in the previous propositions.

\begin{lem}
	Consider the field extensions $F_j/F_0$ over $\F_{q^2}$ of the Garcia-Stichtenoth tower of Example \ref{torreT}. Set $\m{B} \subset \P(F_j)$ the evaluation set with $q^j(q^2-q)$ places and $S = \{Q \cap F_0, \: Q \in \m{B}\} \subset \P(F_0)$. For $S_0 = \bigcup_{j=1}^{q-1} S_i \subset \F_{q^2}$, we have 
	\begin{enumerate}
		\item For $0 \leq i \leq j$ and $\alpha \in \F_{q^2} \setminus S_0$, the function $x_i - \alpha$ have no zeroes in $\m{B}$. For $\alpha \in S_0$, the function $x_i - \alpha$ have exactly $q^j$ zeroes in $\m{B}$.
		\item For $1 \leq i \leq j$ and $\alpha \in S_0$, if $Q \in \P(F_i)$ is a zero of $x_i-\alpha$, then $P=Q \cap F_{i-1}$ verifies $x_{i-1}(P) \in S_k$, where $\alpha \in B_k$,
		\item Conversely, for $1 \leq i \leq j$ and $\alpha \in S_0$, if $P \in \P(F_{i-1})$, verifies $x_{i-1}(P) \in S_k$, where $\alpha \in B_k$, then exists exactly one place $Q \in \P(F_i)$ with $Q|P$ that is a zero of $x_i-\alpha$.
	\end{enumerate}
\end{lem}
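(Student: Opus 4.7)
The three assertions are tightly linked, so the plan is to prove (2) and (3) first---they describe the local correspondence between zeros in consecutive steps of the tower---and then use them together with the complete-splitting hypothesis on $S$ to count in (1).

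Parts (2) and (3) are direct consequences of the defining equation $x_i^q+x_i=x_{i-1}^q/(x_{i-1}^{q-1}+1)$ evaluated at the relevant place. For (2), since $x_i(Q)=\alpha\in B_k$ is finite, the function $u:=x_{i-1}^q/(x_{i-1}^{q-1}+1)\in F_{i-1}$ has no pole at $P=Q\cap F_{i-1}$, which forces $x_{i-1}(P)$ to be finite with $x_{i-1}(P)^{q-1}+1\neq 0$; evaluating the defining equation then gives $\beta_k=\alpha^q+\alpha=x_{i-1}(P)^q/(x_{i-1}(P)^{q-1}+1)$, i.e.\ $x_{i-1}(P)\in S_k$. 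For (3), given $P$ with $x_{i-1}(P)\in S_k$, the function $u$ reduces modulo $P$ to $\beta_k\in\F_q^*$, and the Artin-Schreier polynomial $T^q+T-\beta_k\in\F_{q^2}[T]$ factors as $\prod_{\alpha'\in B_k}(T-\alpha')$. Hence $P$ splits completely in $F_i/F_{i-1}$, and among the $q$ places above $P$ exactly one realizes the prescribed value $\alpha\in B_k$.

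For (1), if $\alpha\notin S_0$, a short induction on $l\leq j$ using the defining equation shows that every $Q\in\mathcal{B}$ satisfies $x_l(Q)\in S_0$ (the base case $l=0$ is the definition of $\mathcal{B}$, and the inductive step uses that $x_l(Q)^q+x_l(Q)\in\F_q^*$ whenever $x_{l-1}(Q)\in S_0$, so $x_l(Q)$ lies in some $B_k\subset S_0$), so $x_i-\alpha$ has no zero in $\mathcal{B}$. For $\alpha\in S_0$ lying in some $B_k$, let $M_i(\alpha)=|\{Q'\in\P(F_i):Q'\cap F_0\in S,\, x_i(Q')=\alpha\}|$. Each such $Q'$ is rational and lies above a place of $S$, which splits completely in $F_j/F_0$; therefore $Q'$ splits completely in $F_j/F_i$ and contributes $q^{j-i}$ places to $\mathcal{B}$, so the number of zeros of $x_i-\alpha$ in $\mathcal{B}$ is $q^{j-i}M_i(\alpha)$. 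An induction on $i$, with base $M_0(\alpha)=1$, then shows $M_i(\alpha)=q^i$: parts (2) and (3) provide a bijection between the set counted by $M_i(\alpha)$ and the disjoint union over $\beta\in S_k$ of the sets $\{P'\in\P(F_{i-1}):P'\cap F_0\in S,\, x_{i-1}(P')=\beta\}$, whose size is $|S_k|\cdot q^{i-1}=q^i$ by the inductive hypothesis. Consequently $x_i-\alpha$ has exactly $q^j$ zeros in $\mathcal{B}$.

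The main subtlety is to propagate the complete-splitting property from $F_0$ through $F_i$ up to $F_j$: once it is observed that any rational place of $F_i$ sitting above $S$ also splits completely in $F_j/F_i$, the clean $q^{j-i}$ factor appears and the bijective correspondence supplied by (2) and (3) turns the $F_i$-count $M_i(\alpha)=q^i$ into the full count $q^j$ of zeros in $\mathcal{B}$.
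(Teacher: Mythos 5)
Your proof is correct and takes essentially the same route as the paper, which proves parts (2) and (3) directly from the defining Artin--Schreier equation and the definitions of the sets $S_i$ and $B_i$, and then counts places to get (1). The paper's proof is just a two-sentence sketch; what you add is the rigorous bookkeeping: the explicit factorization $T^q+T-\beta_k=\prod_{\alpha'\in B_k}(T-\alpha')$ that verifies the complete splitting of $P$ in $F_i/F_{i-1}$, the induction showing that $x_l(Q)\in S_0$ for all $Q\in\mathcal{B}$ and all $l$, and the induction on $M_i(\alpha)$ that establishes the uniform count $q^i$ before multiplying by $q^{j-i}$ for the splitting in $F_j/F_i$. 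These are precisely the steps the paper glosses over with ``straightforward from the definitions'' and ``follows from the fact that each place \ldots has exactly $q$ places above it,'' and your filling them in---in particular the bijection supplied by (2) and (3) that drives the recursion $M_i(\alpha)=\sum_{\beta\in S_k}M_{i-1}(\beta)$---is a genuine and correct elaboration of the same underlying argument rather than a different method.
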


\begin{proof}
The proof of the first part follows from the definition of $S_0$ and the fact that each place $P \in \P(F_{i-1})\cap \m{B}$ has exactly $q$ places of $F_i$ above it. The second and third part are straightforward from the definitions (see equations \eqref{Si} and \eqref{Bi}) of the sets $S_i$ and $B_i$, for $1 \leq i \leq j$ and the recursive defining equation of the tower.
\end{proof}

Since there are exactly $q^j$ places $Q \in \P(F_j)$ laying above a place in $S=\m{B}\cap\P(F_0)$, and exactly $q^{j-1}(q^2-q)=q^j(q-1)$ places in $\P(F_{j-1})$, we obtain a one-to-one bijection between zeroes $Q$ of $x_j - \alpha$ and places $P \in \P(F_{j-1})$ with $x_{j-1}(P) \in S_k$, where $\alpha \in B_k$. 

\begin{defin}
	We say $Q \in \P(F_j)$ \textit{it is colored} $k$ if $x_j(Q) \in S_k$ for some $1 \leq k \leq q-1$, where $Q$ lays under some place in $\m{B}$.
\end{defin}

Let $i \geq 0$, and $j \geq i+3$. There are exactly $\frac{q+1}{2}$ different colours appearing over some place $P \in \P(F_i)$. On the other hand, when considering places $F_j$, each colour appears above exactly $\frac{q+1}{2}$ different colours; that is, for fixed $S_l$ there are exactly $\frac{q+1}{2}$ indexes $\{l_1, \ldots, l_{ \frac{q+1}{2} } \}$ such that for $Q \in \P(F_j)$ with $x_j(Q) \in S_l$ it is $x_{j-1}(Q \cap F_{j-1}) \in S_{k}$ for some $k \in \{l_1, \ldots, l_{ \frac{q+1}{2} } \}$.

Recall that two places of the same colour, when decompose, they decompose each in $q$ places of the same $\frac{q+1}{2}$ colours, independently of the function field they belong.

\begin{cor}\label{cerosencomun}
	For $j \geq i+3$ and any $\alpha, \beta \in S_0$, the functions $x_i-\alpha$ and $x_j-\beta$ have at least one common zero $Q \in \m{B}$.
\end{cor}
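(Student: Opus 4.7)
The plan is to build a common zero $Q \in \m{B}$ step by step along the tower, assembling a chain of places $Q_i \mid Q_{i+1} \mid \cdots \mid Q_j$ with $x_i(Q_i)=\alpha$ and $x_j(Q_j)=\beta$. Write $\alpha \in S_{k'}$, $\beta \in S_{l'}$ and let $k$ be the unique index with $\beta \in B_k$. By part (2) of the previous lemma, any such chain forces a well-defined colour sequence $c_i=k',\ c_{i+1},\ldots,\ c_{j-1}=k,\ c_j=l'$, with $x_m(Q_m)\in S_{c_m}\cap B_{c_{m-1}}$ for $i<m\le j$; and by part (3), conversely, once $Q_{m-1}$ is fixed and any $\gamma_m\in S_0\cap B_{c_{m-1}}$ is chosen, there is exactly one lift $Q_m\mid Q_{m-1}$ with $x_m(Q_m)=\gamma_m$. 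Hence the construction reduces to: (i) choosing a colour sequence with the prescribed endpoints, and (ii) at each intermediate step choosing $\gamma_m\in S_{c_m}\cap B_{c_{m-1}}$, which is possible iff $c_{m-1}\to c_m$ is a valid transition. The starting place $Q_i$ with $x_i(Q_i)=\alpha$ exists by part (1), and the final transition $k\to l'$ is automatic because $\beta$ itself witnesses $S_{l'}\cap B_k\neq\emptyset$.

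This reformulates the question as a purely combinatorial one: does a directed walk of length $j-i-1$ from $k'$ to $k$ exist in the \emph{colour graph} $G$ on vertex set $\{1,\ldots,q-1\}$ whose arcs are $\{k\to l:\,S_l\cap B_k\neq\emptyset\}$? From the paragraphs preceding the statement, every vertex of $G$ has both out-degree and in-degree exactly $(q+1)/2$. Since $(q+1)/2+(q+1)/2=q+1>q-1$, for any pair of vertices $u,v$ one has $|N^+(u)|+|N^-(v)|>|V(G)|$, so the pigeonhole principle gives $N^+(u)\cap N^-(v)\neq\emptyset$; equivalently, $G$ admits a directed walk of length exactly $2$ between any two vertices. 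For a walk of length $j-i-1\ge 2$ from $k'$ to $k$, take $(j-i-3)$ arbitrary out-arcs from $k'$ (possible since the out-degree is positive) to reach some vertex $u$, and append a length-$2$ walk $u\to w\to k$.

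The main obstacle, and the content of the argument, is setting up the dictionary between chains of places carrying prescribed $x_m$-values and walks in $G$, using the \emph{unique}-lifting statement from part (3) of the lemma to guarantee that the step-by-step construction never gets stuck. Once this is in place, the counting $(q+1)/2+(q+1)/2>q-1$ together with the hypothesis $j\ge i+3$ (which ensures $j-i-1\ge 2$) completes the proof.
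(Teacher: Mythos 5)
Your proof is correct and follows essentially the same approach as the paper: both translate the chain of places into a walk in the colour graph and invoke the pigeonhole count $\frac{q+1}{2}+\frac{q+1}{2}=q+1>q-1$ on out/in-neighbourhoods to produce a connecting colour. The only cosmetic difference is where the ``free'' steps sit — you pad the front of the walk with arbitrary out-arcs before applying the length-$2$ pigeonhole, whereas the paper applies the pigeonhole between levels $i$ and $i+2$ and then mimics the tail of the chain below a chosen zero $Q_0$ of $x_j-\beta$.
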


\begin{proof} Let $Q_0 \in \m{B}$ be a zero of $x_j-\beta$ and $P \in \P(F_i)$ be a zero of $x_i-\alpha$. Without loss of generality, we can assume $P$ is of colour \textit{blue} and $Q_0 \cap F_{i+2}$ is \textit{black}. By the pigeonhole principle, among the $\frac{q+1}{2}$ colours above $P$ and among the $\frac{q+1}{2}$ colours below any black place, there is at least one colour in common, let us say \textit{green}. Then, there is a place $P_0 \in \P(F_{i+1})$ such that $P_0|P$ and $P_0$ is green, therefore there is also a place $P_1 \in \P(F_{i+2})$ such that $P_1|P_0$ and $P_1$ is black.  Finally, since $P_1$ is black, there is also a place $Q \in \P(F_j)$ such that $Q|P_1$ and $x_j(Q) = x_j(Q_0) = \beta$. Since $Q | P$ we also have $x_i(Q)=x_i(P) = \alpha$. Thus, $Q \in \m{B}$ and it is a common zero of $x_i-\alpha$ and $x_j-\beta$.
\end{proof}

 \begin{figure}[h!t]
        \begin{center}
  \begin{tikzpicture}[scale=0.95]
 \draw[line width=0.5 pt](-0.9,0)--(-0.9,3.5);
   \draw[white,  fill=white](-0.9,0) circle (0.23 cm);
      \draw[white,  fill=white](-0.9,1) circle (0.23 cm);
   \draw[white,  fill=white](-0.9,2) circle (0.23 cm);
\draw[white,  fill=white](-0.9, 3.5) circle (0.23 cm);
\node at(-0.9,0){\footnotesize{$F_i$}};
\node at(-0.9,1){\footnotesize{$F_{i+1}$}};
\node at(-0.9,2){\footnotesize{$F_{i+2}$}};
\node at(-0.9,3.5){\footnotesize{$F_{j}$}};

 \draw[line width=0.5 pt](1.9,2)--(1.9,3.5);
   \draw[white,  fill=white](1.9,2) circle (0.23 cm);
\draw[white,  fill=white](1.9, 3.5) circle (0.23 cm);

\node at(1.9,2){\footnotesize{$Q_0\cap F_{i+2}$}};
\node at(1.9,3.5){\footnotesize{$Q_0$}};

 \draw[line width=0.5 pt](3.9,0)--(3.9,3.5);
   \draw[white,  fill=white](3.9,0) circle (0.23 cm);
      \draw[white,  fill=white](3.9,1) circle (0.23 cm);
   \draw[white,  fill=white](3.9,2) circle (0.23 cm);
\draw[white,  fill=white](3.9, 3.5) circle (0.23 cm);
\node at(3.9,0){\footnotesize{$P$}};
\node at(3.9,1){\footnotesize{$P_0$}};
\node at(3.9,2){\footnotesize{$P_1$}};
\node at(3.9,3.5){\footnotesize{$Q$}};

 \end{tikzpicture}
  \caption{Diagram of extension and places of Corollary \ref{cerosencomun}}\label{figu0bis}
\end{center}\end{figure}
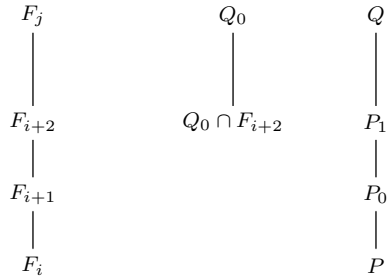

The previous corollary, make us conjecture the following:

\begin{conjecture}
Any function $f \in  V$ will have fewer than $$lq^{k}+(k-1)(q-1)q^{k-1}+(q-2)q^{k-1}$$ zeroes in $\m{B}$, and therefore the lower bound for the distance in Theorem \ref{Teounr} will not be attained for $i \geq 3$.
\end{conjecture}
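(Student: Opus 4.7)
The plan is to split the argument into a straightforward product-form case and a much harder general case. For $f$ of product form $f=h_0(x_0)h_1(x_1)\cdots h_i(x_i)$, writing $H_j\subseteq S_0$ for the set of roots of $h_j$ and $Z_j=\{Q\in\m{B}:x_j(Q)\in H_j\}$, the zero locus of $f$ in $\m{B}$ is exactly $\bigcup_j Z_j$. The Theorem~\ref{Teounr} bound is attained precisely when the $Z_j$ are pairwise disjoint and each has maximal cardinality. Since $i\geq 3$, one can find indices $j<k$ with $k-j\geq 3$ and non-trivial $h_j,h_k$; Corollary~\ref{cerosencomun} then immediately produces common elements of $Z_j\cap Z_k$, forcing a strict inequality. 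This settles the product-form case and makes precise why the explicit constructions of Propositions~\ref{exactmd} and~\ref{prop42} break down once three or more steps separate some pair of factors.

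For the harder general case, I would expand an arbitrary $f\in V$ as a polynomial $f=\sum_{e=0}^{q-2}g_e(x_0,\ldots,x_{i-1})\,x_i^{e}$ with coefficients $g_e\in\m{L}(G)$, where $G$ is the divisor of $F_{i-1}$ built in the proof of Theorem~\ref{Teounr}, and then count the zeros of $f$ in $\m{B}$ fibre by fibre over the split places $P\in S$. Over each such $P$, the $q^i$ places of $\m{B}$ above it come equipped with the combinatorial structure spelled out in the lemma preceding Corollary~\ref{cerosencomun} (the values $x_j(Q)$ are linked recursively through~\eqref{yq+y}), and one should exploit this coupling to propagate extra vanishing conditions on the tuple $(g_0,\ldots,g_{q-2})$ whenever $f$ is assumed to have a near-maximal number of zeros in $\m{B}$.

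The main obstacle is precisely the passage from product form to general $f$: an arbitrary element of $V$ does not factor along the $x_j$-axes, so Corollary~\ref{cerosencomun} cannot be applied to $f$ directly. A plausible way around this is to use the Galois action of $\mathrm{Gal}(F_i/F_0)$, or of the larger automorphism group used in~\cite{LMX19}, on $V$ and on $\m{B}$: decomposing $V$ into invariant subspaces, one would hope to reduce the optimisation to components whose maximisers are structured enough for the product-form argument to apply. A complementary route is a direct dimension count, comparing $\dim V$ with the codimension of the vanishing conditions a maximiser would have to satisfy --- including the unavoidable common zeros forced by Corollary~\ref{cerosencomun} --- in order to conclude that no such function can exist.
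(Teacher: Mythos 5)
You appear to be attempting a full proof of a statement that the paper itself leaves unproven: this is explicitly labelled a \emph{conjecture}, and the authors' only supporting evidence is Corollary~\ref{cerosencomun} together with the combinatorial ``colouring'' discussion that precedes it. The paper offers no proof to compare against, so the relevant question is whether your proposal actually closes the gap. It does not.

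Your product-form argument is sound in spirit and is, in fact, essentially a rephrasing of what the paper already records: for $f = \prod_{j} h_j(x_j)$ the zero set in $\m{B}$ is $\bigcup_j Z_j$, the Theorem~\ref{Teounr} bound forces every $h_j$ to have maximal degree with all roots in $S_0$ and the $Z_j$ pairwise disjoint, and Corollary~\ref{cerosencomun} then yields $Z_0\cap Z_i\neq\emptyset$ as soon as $i\geq 3$. (Two small cautions: you should state the assumption $l\geq 1$ so that $h_0$ is forced to be non-trivial, and you should justify that attaining the bound forces \emph{every} $h_j$ to have maximal degree with roots in $S_0$, since a priori one could imagine trading degree between factors.) But this is the easy half, and it is the half the paper has already effectively established; it is precisely why the authors \emph{conjecture} rather than prove the statement.

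The genuine gap is the one you name yourself: an arbitrary $f\in V$ is a general multivariate polynomial in $x_0,\dots,x_i$ and does not factor along the axes, so Corollary~\ref{cerosencomun} cannot be applied to it at all. Your two suggested routes --- decomposing $V$ under the action of $\mathrm{Gal}(F_i/F_0)$ or a larger automorphism group, or a dimension/codimension count against the vanishing conditions forced by the common zeros --- are plausible directions, but neither is carried out even to the level of a lemma statement. In particular it is not clear that the Galois action respects the monomial structure of $V$ (the $x_j$ are not Galois-stable under $\mathrm{Gal}(F_i/F_0)$ in general), nor that the ``propagated vanishing conditions'' in your fibre-by-fibre expansion $f = \sum_e g_e(x_0,\dots,x_{i-1})x_i^e$ are independent enough for a dimension count to close. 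As it stands, the proposal reproduces the paper's heuristic and honestly flags the missing argument, but it does not prove the conjecture.
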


\section{Relative parameters and asymptotics}

In this last section, we will analyze the relative parameters $R$ and $\delta$ of the codes $C_i(S,D)$ built in the previous sections, using the tower described in Example \ref{torreT}.

%

\begin{pro}\label{propcotai} For each odd $q \geq 5$ and any $2 \leq i \leq q-1$, $1 \leq l\leq  (q-1)(q-i)$, the relative parameters $R$ and $\delta$ of $C_i(S,D)$ with $D = l P_{\infty}$ verify:
	
$$ R + \frac{q-1}{q} \delta > \frac{r}{r+1} \left( \frac{q-i}{q} \right),$$  where $r = q-1$ is the locality of $C_i$.
\end{pro}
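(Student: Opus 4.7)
The plan is to substitute the explicit parameter formulas from Example~\ref{torreT} directly into the target expression $R + \frac{q-1}{q}\delta$ and simplify; the statement turns out to be a one-line arithmetic identity once the right cancellation is spotted.

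First I would invoke the hypotheses $2 \leq i \leq q-1$ and $1 \leq l \leq (q-1)(q-i)$ to ensure, via the Remark following Example~\ref{torreT}, that the lower bound $d \geq (q^2-2q+2-l-(q-1)(i-1))q^i$ is a positive integer and therefore applicable. Dividing by $n = q^i(q^2-q)$ yields
\[ R = \frac{l+1}{q^2}, \qquad \delta \geq \frac{q^2-2q+2-l-(q-1)(i-1)}{q(q-1)}. \]

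Next I would form $R + \frac{q-1}{q}\delta$: the factor $(q-1)/q$ clears the $(q-1)$ in the denominator of the $\delta$-bound, placing both summands over the common denominator $q^2$, and the $l$-contributions cancel ($+l/q^2$ from $R$ and $-l/q^2$ from $\tfrac{q-1}{q}\delta$). A short rearrangement gives
\[ R + \frac{q-1}{q}\delta \;\geq\; \frac{q^2 - q(i+1) + (i+2)}{q^2}. \]

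Finally I would rewrite the target as
\[ \frac{r}{r+1}\cdot\frac{q-i}{q} \;=\; \frac{(q-1)(q-i)}{q^2} \;=\; \frac{q^2 - q(i+1) + i}{q^2}, \]
and observe that the two numerators differ by exactly $2$, which yields the strict inequality with explicit slack $2/q^2$. There is no substantive obstacle here: the proof reduces to the arithmetic identity
\[ (l+1) + \bigl(q^2-2q+2-l-(q-1)(i-1)\bigr) - (q-1)(q-i) \;=\; 2, \]
so the only care needed is in verifying the validity range of $l$ so that the distance bound from Example~\ref{torreT} can be legitimately invoked, and in tracking the constants through the simplification.
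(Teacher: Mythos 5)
Your proof is correct and follows essentially the same route as the paper's: substitute $R = (l+1)/q^2$ and the lower bound $\delta \geq \bigl((q-1)(q-i)-l+1\bigr)/(q^2-q)$ from Example~\ref{torreT}, combine over the denominator $q^2$ so that the $l$-terms cancel, and observe the resulting numerator exceeds $(q-1)(q-i)$ by exactly $2$. The only cosmetic difference is that you carry the unsimplified form $q^2-2q+2-l-(q-1)(i-1)$ through the computation while the paper first rewrites it as $(q-1)(q-i)-l+1$; the final arithmetic identity and the $2/q^2$ slack are identical.
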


\begin{proof}
	Since $\delta  \geq \dfrac{((q-1)(q-i) - l+1)q^i}{q^i(q^2-q)} = \dfrac{q}{q-1} \dfrac{(q-1)(q-i)-l+1}{q^2}$ we have $$R + \frac{q-1}{q} \delta \geq \frac{(q-1)(q-i)+2}{q^2} > \frac{(q-1)(q-i)}{q^2} = \frac{r}{r+1} \left( \frac{q-i}{q} \right).$$ 
\end{proof}

In the case $i = 2$, we have the following result, similar to the bound shown in \cite{LMX19}.

We now compare the relative parameters of the codes $C_2(S,D)$ with $D = \left(q\frac{q-1}{2} \right) P_{\infty}$ shown in  Proposition \ref{prop42}.
Recall that	$$R=\frac{k}{n}= \frac{(l+1)}{q^2}  \qquad \text{and} \qquad \delta =\frac{d}{n} \geq \frac{(q-1)(q-2)-l+1}{q^2-q} .$$

\begin{pro}\label{propcota} For each odd $q \geq 5$ and any $1 \leq l\leq  (q-1)(q-2)$, the relative parameters $R$ and $\delta$ of $C_2(S,D)$ with $D = l P_{\infty}$ verify:
	
$$ R + \frac{q-1}{q} \delta > \frac{r}{r+1} \left( \frac{q-2}{q} \right),$$ or equivalently,

\begin{equation}\label{cotanuestra}
	R > \frac{r}{r+1} \left( 1 - \delta - \frac{2}{q} \right),
\end{equation} where $r = q-1$ is the locality of $C_2$.
\end{pro}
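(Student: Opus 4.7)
The plan is to just substitute the explicit lower bounds on $R$ and $\delta$ into the left-hand side and verify the inequality directly; this is essentially the $i=2$ specialization of Proposition~\ref{propcotai}, but stated in the more familiar Gilbert--Varshamov-type form. The equivalence of the two displayed inequalities is a pure algebraic rearrangement: if $r=q-1$, then $\tfrac{r}{r+1}\bigl(1-\delta-\tfrac{2}{q}\bigr)=\tfrac{q-1}{q}-\tfrac{q-1}{q}\delta-\tfrac{2(q-1)}{q^2}$, so moving the $\delta$ term across and simplifying shows $R>\tfrac{r}{r+1}(1-\delta-2/q)$ is the same as $R+\tfrac{q-1}{q}\delta>\tfrac{(q-1)(q-2)}{q^2}=\tfrac{r}{r+1}\cdot\tfrac{q-2}{q}$.

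First I would record, from the parameters of $C_2(S,D)$ displayed in the discussion preceding the proposition, that
$$
R=\frac{l+1}{q^2},\qquad \delta\geq \frac{(q-1)(q-2)-l+1}{q^2-q}=\frac{q}{q-1}\cdot\frac{(q-1)(q-2)-l+1}{q^2}.
$$
(The bound for $\delta$ is the rewriting of $d\geq \bigl((q-1)(q-2)-l+1\bigr)q^2$ from Theorem~\ref{Teounr} with $i=2$, divided by $n=q^2(q^2-q)$.) Multiplying the bound for $\delta$ by $\tfrac{q-1}{q}$ clears the awkward factor and yields
$$
\tfrac{q-1}{q}\delta\geq \frac{(q-1)(q-2)-l+1}{q^2}.
$$

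Then I would simply add this to the expression for $R$: the $l$ contributions cancel and one obtains
$$
R+\frac{q-1}{q}\delta\;\geq\;\frac{l+1}{q^2}+\frac{(q-1)(q-2)-l+1}{q^2}\;=\;\frac{(q-1)(q-2)+2}{q^2}\;>\;\frac{(q-1)(q-2)}{q^2},
$$
which is the first displayed inequality. The hypothesis $1\leq l\leq (q-1)(q-2)$ is exactly what is needed to guarantee that the right-hand side of the distance bound in Theorem~\ref{Teounr} is a positive integer, so that the lower bound for $\delta$ is meaningful. Finally, using the algebraic equivalence noted above, one recovers the Gilbert--Varshamov-type form \eqref{cotanuestra}. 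There is no real obstacle here: the only thing to double-check is that the cancellation of $l$ in the sum is not accidental — it follows because $R$ is an increasing linear function of $l$ with slope $1/q^2$ while the bound for $\tfrac{q-1}{q}\delta$ is a decreasing linear function of $l$ with slope $-1/q^2$, so their sum is independent of $l$, which is why the bound on $R+\tfrac{q-1}{q}\delta$ holds uniformly in the admissible range of $l$.
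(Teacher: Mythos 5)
Your proof is correct and follows essentially the same computation as the paper: substitute $R=\tfrac{l+1}{q^2}$ and the lower bound for $\delta$, multiply by $\tfrac{q-1}{q}$, observe the cancellation of $l$, and conclude $R+\tfrac{q-1}{q}\delta\geq\tfrac{(q-1)(q-2)+2}{q^2}>\tfrac{(q-1)(q-2)}{q^2}$. Your version is if anything slightly more careful, since you keep the inequality $\delta\geq\cdots$ explicit throughout rather than writing it as an equality, and you spell out the algebraic equivalence of the two displayed forms.
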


\begin{proof}
	Since $\delta = \dfrac{(q-1)(q-2)-l+1}{q^2-q} = \dfrac{q}{q-1} \dfrac{(q-1)(q-2)-l+1}{q^2}$ we have $$R + \frac{q-1}{q} \delta = \frac{(q-1)(q-2)+2}{q^2} > \frac{(q-1)(q-2)}{q^2} = \frac{q-1}{q} \left( 1 - \frac{2}{q} \right).$$ 
\end{proof}

\begin{pro}\label{propejemplo}
	For each odd $q \geq 5$ and $l = q \left(\frac{q-1}{2}-1 \right)$, the relative parameters $R$ and $\delta$ of $C_2(S,D)$ with $D = l P_{\infty}$ verify:
	
	\begin{equation}\label{parametrosnuestros} R= \frac{(l+1)}{q^2} > \frac{1}{2}\left( \frac{q-2}{q} \right)^2  \quad \text{and} \quad \delta = \frac{(q-1)(q-2)-l+1}{q^2-q} > \frac{1}{2} \left( \frac{q-3}{q-1} \right). \end{equation}
\end{pro}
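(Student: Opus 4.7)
The plan is a direct verification. I would first substitute $l = q\bigl(\tfrac{q-1}{2}-1\bigr) = \tfrac{q(q-3)}{2}$ into both expressions, after checking that this value lies in the admissible range $1 \leq l \leq (q-1)(q-2)$ required by Proposition \ref{propcota} for the given formula for $\delta$ to apply; this is immediate for $q \geq 5$.

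For the rate, plugging into $R = (l+1)/q^2$ and rewriting the numerator yields $R = \tfrac{(q-1)(q-2)}{2q^2}$. Since the target $\tfrac{1}{2}\bigl(\tfrac{q-2}{q}\bigr)^2 = \tfrac{(q-2)^2}{2q^2}$ has exactly the same denominator, after cancelling the positive factor $q-2$ the inequality reduces to the trivial $q-1 > q-2$.

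For the relative distance, I would combine $(q-1)(q-2) - l + 1$ over a common denominator to obtain $\tfrac{q^2-3q+6}{2}$, giving $\delta = \tfrac{q^2-3q+6}{2q(q-1)}$. Comparing with the target $\tfrac{q-3}{2(q-1)}$ and cross-multiplying by the positive factor $2q(q-1)$, the inequality reduces to $q^2-3q+6 > q(q-3) = q^2-3q$, i.e., $6 > 0$.

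There is no substantive obstacle: both inequalities are strict by a constant margin (namely $q-2$ in the numerator gap for $R$, and $6$ for $\delta$), and the only care needed is to match denominators and confirm positivity of the factors one divides by. The value of this proposition is less in the proof than in the explicit relative parameters it produces for the construction of Proposition \ref{prop42}, to be used in the asymptotic comparison with \eqref{GVthem}.
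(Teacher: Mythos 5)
Your proof is correct and follows essentially the same route as the paper: substitute $l = q(q-3)/2$, simplify $l+1$ to $(q-1)(q-2)/2$ for the rate, and simplify $(q-1)(q-2)-l+1$ to $(q^2-3q+6)/2$ for the relative distance, reducing each inequality to a trivially true comparison. The only (harmless) addition is your explicit check that $l$ lies in the admissible range.
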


\begin{proof}
	By Proposition \ref{prop42} we have  $n = q^2(q^2-q)$, $k = (l+1)(q-1)q$ and $d = \frac{1}{2}q^2(q^2-3q+6)$ so
	\begin{align*} R &= \frac{l+1}{q^2}  \\
		& = \frac{q \left(\frac{q-1}{2} -1 \right)+1}{q^2} \\
		& = \frac{1}{2} \frac{q (q-1-2)+2}{q^2} \\
		& = \frac{1}{2} \frac{(q-1)(q-2)}{q^2}  > \frac{1}{2} \left( \frac{q-2}{q} \right)^2.
	\end{align*}
On the other hand, 	$$\delta = \frac{1}{2}\frac{q^2-3q+6}{q^2-q} > \frac{1}{2}\frac{q-3}{q-1}.$$

\end{proof}

For $q \geq 5$ and $1 \leq l \leq (q-1)(q-2)$, Proposition \ref{propcota} ensures that the relative parameters of $C_2(S,D)$ are all above the line $
	 R = \frac{q}{q+1}\left( 1 - \delta - \frac{2}{q} \right)$ improving slightly a result in \cite{BargTamoVladut}, while Proposition \ref{propejemplo} shows the existence of a code whose relative parameters lie exactly on that line. In Figures \ref{ejemploq7} and \ref{ejemploq17} below, we can see those cases for $q=7$ and $q=17$.

\begin{figure}[h]
\includegraphics[scale=0.5]{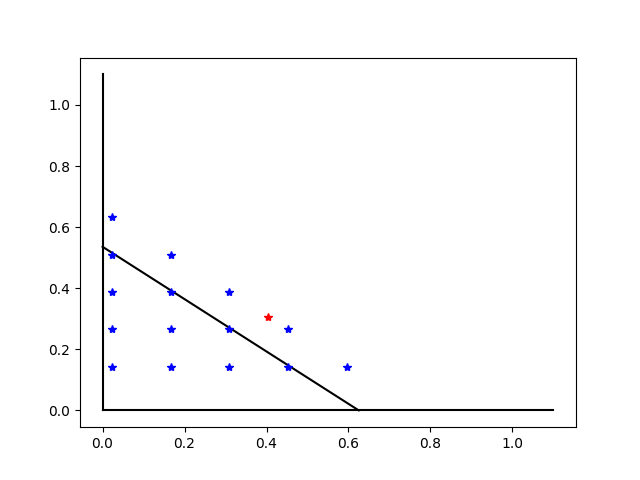}
 \caption{Inequality \eqref{GVthem} for $r+1=q=7$ is shown as a black line. For $2 \leq i \leq 6$, lower bounds of the relative parameters of $C_i(S,D)$ are shown as blue points. The parameters of $C_2(S,D)$ shown in \eqref{parametrosnuestros} appeared as a red point.} \label{ejemploq7}
\end{figure}

\begin{rem}
 For $C_3(S,D)$ and $1 \leq l \leq (q-1)(q-3)$ we can not guarantee by Proposition \ref{propcotai} that its relative parameters $(R,\delta)$ lie above that line $R = \frac{q}{q+1} \left(1 - \delta - \frac{3}{q+1} \right)$. However, taking Corollary \ref{cerosencomun} in consideration, one could expect that the lower bound of $\delta$ may be improved from some step and on, obtaining in this way a stronger lower bound for $\delta$. 
 
\end{rem}

\begin{figure}[ht]
\includegraphics[scale=0.5]{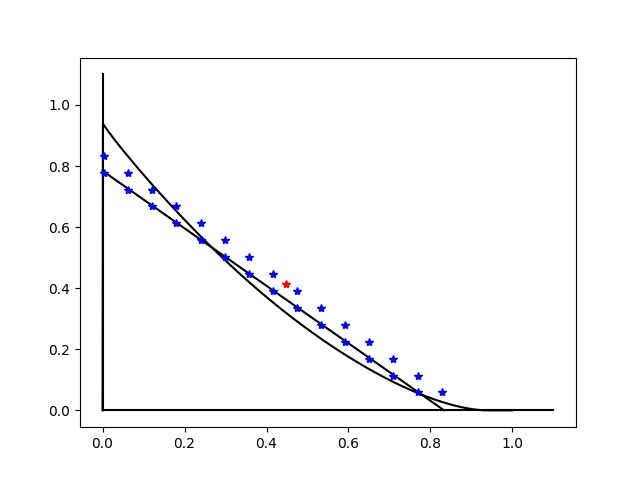}
 \caption{Inequalities \eqref{GVthem} and \eqref{GVbound} for $r+1=q=17$ are shown as black curves. For $2 \leq i \leq 3$, lower bounds of the relative parameters of $C_i(S,D)$ are shown as blue points. The parameters of $C_2(S,D)$ shown in \eqref{parametrosnuestros} appeared as a red point.} \label{ejemploq17}
\end{figure}

\begin{rem}
	For $q$ sufficiently large, and for certain $2 \leq i \leq q-1$, and  $1 \leq l \leq (q-1)(q-i)$, we have from Proposition \ref{propcotai}, that the lower bound for the relative parameters $(R,\delta)$ of $C_i(S,D)$ with $D = l P_{\infty}$, improves a bound analogous to the GV bound, derived in \cite{TBF16}. This is, for suitable $i$ and $l$, we have that $(R,\delta)$ lies above the curve 
	\begin{equation}\label{GVbound} R = \frac{r}{r+1} - \min_{0 < s \leq 1} \left\{	\frac{1}{r+1} \log_qb_2(s) -\delta \log_q(s)\right\}
\end{equation}
where $b_2(s) = \frac{1}{q} \left((1 + (q-1)s)^{r+1}+ (q-1)(1-s)^{r+1} \right).$
\end{rem}

\newpage
\bibliographystyle{plain}

\end{document}